\documentclass[12pt,table]{article}
\usepackage{graphicx} 
\usepackage{epstopdf}
\usepackage{amsthm,amsmath,amssymb, amsfonts, amscd, xspace, pifont}
\usepackage{epsfig, psfrag,pstricks}
\usepackage{times}
\usepackage[authoryear,round]{natbib}
\usepackage{algorithm}
\usepackage{algpseudocode}
\usepackage{booktabs}
\usepackage{multirow}
\usepackage{makecell}
\usepackage{array}
\usepackage{tabularx}
\usepackage{setspace}
\usepackage{subcaption} 

\newtheorem{assump}{Assumption}

\newtheorem{proposition}{Proposition}
\providecommand{\keywords}[1]{\textbf{\textit{Keywords: }} #1}

\def\bepsilon{\mbox{\boldmath $\epsilon$}}  

\def\bmu{\mbox{\boldmath $\mu$}}

\def\bR{\mathbb{R}}
\def\bE{\mathbb{E}}   

\def\bN{\mathbb{N}}

\def\bb{{\bf b}}

\def\bY{{\bf Y}}

\def\bw{{\bf w}}

\def\bv{{\bf v}}

\def\bI{{\bf I}}
\def\bx{{\bf x}}
\def\by{{\bf y}}
\def\bz{{\bf z}}

\def\nn{\nonumber}
\def\v2{\vspace{0.2in}}

\def\diff{\mathrm{d}}
\def\var{\text{var}}

\newcommand{\ee}{{\rm e}}

\numberwithin{equation}{section}
\begin{document}
\setcounter{page}{1}
\baselineskip=15pt
\footskip=.3in
\parskip=5pt
\parindent=15pt

\title{Annealed Langevin Monte Carlo for Flow ODE Sampling}

\date{}

\author{Hanwen Huang \\\\
    {\it Department of Biostatistics, Data Science and Epidemiology}\\
    {\it Medical College of Georgia}\\
    {\it Augusta University, Augusta, GA 30912}\\
     hhuang1@augusta.edu}

\maketitle

\begin{abstract}
We propose Annealed Langevin Monte Carlo for Flow ODE Sampling (ALMC-ODE), a method for generating samples from unnormalized target distributions, with a particular emphasis on multimodal densities that are challenging for standard Markov chain Monte Carlo methods. ALMC-ODE is based on a probability-flow ordinary differential equation (ODE) derived from stochastic interpolants, which continuously transports a standard Gaussian reference distribution at $t = 0$ to the target distribution $\rho$ at $t = 1$. The key innovation lies in an annealed Langevin Markov chain that evolves through a sequence of intermediate distributions bridging the reference and the target. The resulting importance-weighted particles, reweighted via a Jarzynski-based scheme, yield a low-variance estimator of the velocity field governing the ODE. On the theoretical side, we establish a Jarzynski-type reweighting identity for general time-inhomogeneous transition kernels, characterize the optimal backward kernel that minimizes the variance of the importance weights, and prove an $\mathcal{O}(1/n)$ mean squared error bound for the resulting velocity-field estimator. Numerical experiments on challenging benchmarks, including Gaussian mixture models and a 64-dimensional Allen--Cahn field system, demonstrate that ALMC-ODE significantly outperforms both direct Monte Carlo ODE approaches and Hamiltonian Monte Carlo when applied to highly multimodal target distributions.
\end{abstract}

\keywords{Annealed importance sampling, Jarzynski equality, Langevin Monte Carlo, Probability-flow ODE,  Stochastic interpolant}

\section{Introduction}
We consider the problem of sampling from a target distribution $\rho\propto\ee^{-V}$ on $\bR^d$. This problem is central to computational statistics computational statistics \citep{liu2004monte,brooks2011handbook}, Bayesian inference \citep{gelman2013bayesian}, statistical physics \citep{newman1999monte}, finance \citep{dagpunar2007simulation}, and related areas, and has been studied extensively in the literature \citep{chewi2022log}. A standard approach is Markov Chain Monte Carlo (MCMC), encompassing algorithms such as Metropolis–Hastings \citep{metropolis1953equation,HASTINGS}, Gibbs sampling \citep{Gelfand01061990}, Langevin-based methods \citep{Exponential, Dalalyan}, and Hamiltonian Monte Carlo (HMC) \citep{duane1987,neal2011}. 

While MCMC methods are well understood for log-concave or unimodal targets, their performance deteriorates sharply in multimodal settings. In particular, the presence of large energy barriers induces metastability, leading to exponentially slow mixing and poor exploration. It is by now well established that even optimally tuned HMC and random-walk Metropolis algorithms can exhibit exponential mixing times in the barrier height \citep{mangoubi2018hamiltonian,ma2019sampling,dunson2020hastings,dong2022spectral}. This motivates the development of sampling schemes capable of efficiently traversing complex energy landscapes. Existing approaches include tempering and annealing methods \citep{Marinari1992,neal2001annealed}, equi-energy sampling \citep{kou2006}, adaptive biasing techniques \citep{WangLandau2001,doi:10.1073/pnas.202427399}, and flow-based or dynamical transport methods \citep{doi:10.1073/pnas.202427399,Hackett2021FlowbasedSF,doi:10.1137/23M1577584}.

In this work, we propose \emph{Annealed Langevin Monte Carlo for Flow ODE Sampling} (ALMC-ODE), a method that constructs a transport from a tractable reference measure to $\rho$ via the probability-flow ODE associated with a stochastic interpolant. The key computational bottleneck in such approaches is the estimation of the velocity field. Direct Monte Carlo estimators \citep{ding2023samplingfollmerflow} are known to suffer from high variance, particularly in high dimensions. To address this, we introduce an annealed scheme \citep{gelfand1990sampling,neal2001annealed,song2019generative,song2020improved,zilberstein2022annealed,zilberstein2024solving} that defines a sequence of intermediate distributions $\{\pi_k\}_{k=0}^K$ connecting a reference $\pi_0$ (e.g., Gaussian) to $\pi_K=\rho$. We simulate an annealed Langevin Markov chain along this path and construct an importance-weighted estimator of the velocity field. The resulting estimator leverages both local equilibration and global reweighting, yielding substantially reduced variance relative to naive Monte Carlo integration.

Our contributions are as follows:
\begin{itemize}
\item We extend the Jarzynski-type reweighting framework of \cite{Carbone, cuin2025learning} to general time-inhomogeneous Markov chains and show that the variance-minimizing importance weights coincide with those of the standard importance sampling estimator.
\item We introduce a consistent empirical estimator of the optimal weights along the annealing path.
\item We establish a non-asymptotic mean-squared error bound of order $\mathcal{O}(1/n)$ for the resulting velocity field estimator.
\item We demonstrate empirically, on high-dimensional and multimodal benchmarks (including Gaussian mixtures and Allen–Cahn-type field models), that ALMC-ODE outperforms both direct Monte Carlo ODE methods and HMC in terms of sampling efficiency.
\end{itemize}

\paragraph{Related work.}
Flow-based constructions of transport maps via probability-flow ODEs and diffusion models have recently received significant attention \citep{albergo2023stochastic,albergo2023building,lipman2023flow}. A central challenge in these approaches is the accurate estimation of the velocity field (or equivalently, the score function). Existing methods based on importance sampling \citep{huang2024faster,51} or rejection sampling \citep{he2024zeroth} degrade in high dimensions due to variance explosion. Langevin-based estimators improve scalability by exploiting local geometric information, but typically require repeated sampling from intermediate conditional distributions  \citep{39,huang2024reverse,huang2024faster}. In contrast, our method integrates annealing and importance reweighting within a single sequential Monte Carlo–type procedure, thereby amortizing sampling effort across the path. Closely related is the work of \cite{Duan2026SamplingVS}, which employs Langevin dynamics to estimate the velocity field at each time point; however, their approach does not exploit annealing and thus incurs a substantially higher computational cost due to repeated independent simulations.

The remainder of the paper is organized as follows. Section~\ref{method} introduces the stochastic interpolant and probability-flow ODE framework and reviews the Jarzynski identity underlying our reweighting scheme. Section~\ref{sec:algorithm} presents the ALMC-ODE algorithm. Section~\ref{mse} provides the error analysis. Section~\ref{numerical} contains numerical experiments, and Section~\ref{conclusion} concludes.
\section{Sampling via Flow ODE}\label{method}

This section develops the theoretical foundation of our approach. We begin in
Section~\ref{condsb} by deriving a probability-flow ODE that continuously transports
samples from a standard Gaussian to the target $\rho$. The central computational
challenge is estimating the velocity field that governs this ODE. In
Section~\ref{almc} we show that this reduces to a weighted expectation over samples from
an intermediate distribution, and in Section~\ref{sec:sbp} we describe how annealed
Langevin dynamics combined with a Jarzynski-based reweighting scheme can efficiently
supply those samples.

\subsection{Stochastic Interpolant and Probability Flow ODE}\label{condsb}

Our goal is to construct a deterministic map that transports samples from a tractable
Gaussian distribution to the target $\rho(\bx)$. The key idea is to design a
time-dependent stochastic process that interpolates between a Gaussian source at $t = 0$
and the target at $t = 1$. Following \cite{albergo2023stochastic,albergo2023building}, we
define the \emph{stochastic interpolant}
\begin{equation}
    \label{eq:vec-interp}
    \bx_t = \alpha(t)\,\bz + \beta(t)\,\bx_1,
\end{equation}
where $\bz \sim N(0, \bI_d)$ and $\bx_1 \sim \rho$ are independent, and $\alpha(t)$,
$\beta(t)$ are smooth scalar functions satisfying the boundary conditions
\begin{eqnarray}\label{eq:abt}
\alpha(0) = 1, \quad \beta(0) = 0, \quad \alpha(1) = 0, \quad \beta(1) = 1,
\end{eqnarray}
so that $\bx_0 = \bz \sim N(0,\bI_d)$ and $\bx_1 \sim \rho$ as required.  Several
concrete choices of $(\alpha, \beta)$ satisfying these conditions are listed in
Table~\ref{tab:vec-interp} \citep{JMLR:v25:23-1515}.

\begin{table}[thb]
    \centering
    \begin{tabular}{|c|c|c|c|}\hline
    Type  & Linear & F{\"o}llmer & Trigonometric \\\hline
    $\alpha(t)$  & $1-t$ & $\sqrt{1 - t^2}$ & $\cos(\tfrac{\pi}{2}t)$ \\\hline
    $\beta(t)$  & $t$ & $t$   & $\sin(\tfrac{\pi}{2}t)$ \\\hline
    \end{tabular}
    \caption{Summary of various measure interpolants: the linear interpolant
    \citep{liu2023flow}, the F{\"o}llmer interpolant \citep{dai2023lipschitz}, and the
    trigonometric interpolant \citep{albergo2023building}.}
    \label{tab:vec-interp}
\vspace{-0.1cm}
\end{table}

It was shown in \cite{albergo2023stochastic,albergo2023building} that the marginal density
of $\bx_t$ defined in \eqref{eq:vec-interp} satisfies a continuity equation, and its
evolution is described by the following probability-flow ODE (initial value problem):
\begin{equation} \label{eq:ode-ivp0}
    \frac{\diff \bx_t}{\diff t} = \bv(t, \bx_t), \quad \bx_0 \sim N(0,\bI_d),
    \quad t \in [0, 1],
\end{equation}
where the velocity field is
\begin{eqnarray}\label{sbdm}
\bv(t,\bx) = \dot\alpha(t)\,\bE(\bz\,|\,\bx_t=\bx)
           + \dot\beta(t)\,\bE(\bx_1\,|\,\bx_t=\bx).
\end{eqnarray}
Here and throughout, $\dot\alpha$ and $\dot\beta$ denote derivatives with respect to $t$.
Crucially, at any time $t \in [0,1]$, the probability-flow ODE \eqref{eq:ode-ivp0} has
the same marginal density as the stochastic interpolant $\bx_t$ defined in
\eqref{eq:vec-interp}. In particular, simulating \eqref{eq:ode-ivp0} from $\bx_0 \sim
N(0,\bI_d)$ produces samples with distribution $\rho$ at $t = 1$, provided the velocity
field $\bv(t,\bx)$ can be evaluated.

\paragraph{Reducing the velocity field to a single conditional expectation.}
From \eqref{eq:vec-interp} we have $\bx = \alpha(t)\,\bE[\bz \mid \bx_t = \bx] +
\beta(t)\,\bE[\bx_1 \mid \bx_t = \bx]$, which allows us to eliminate the $\bz$-term from
\eqref{sbdm}:
\begin{equation}
    \label{eq:vf-expect-single-rmk}
    \bv(t, \bx) = \frac{\dot{\alpha}(t)}{\alpha(t)}\,\bx
      + \left( \dot{\beta}(t) - \frac{\dot{\alpha}(t)}{\alpha(t)}\beta(t) \right)
        \bE[\bx_1 \mid \bx_t = \bx], \quad t \in [0, 1).
\end{equation}
Thus, computing $\bv(t,\bx)$ reduces to evaluating the single conditional expectation
$\bE[\bx_1 \mid \bx_t = \bx]$.  Applying Bayes' theorem and using the fact that
$\bx_t \mid \bx_1 \sim N(\beta(t)\bx_1,\, \alpha(t)^2 \bI_d)$ (from
\eqref{eq:vec-interp}), this conditional expectation takes the integral form
\begin{eqnarray}\label{velocity_int}
\bE[\bx_1 \mid \bx_t=\bx]
  = \frac{\displaystyle\int\bx_1\exp\!\left\{-\frac{\|\bx-\beta(t)\bx_1\|^2}{2\alpha(t)^2}\right\}\rho(\bx_1)\,\diff\bx_1}
         {\displaystyle\int\exp\!\left\{-\frac{\|\bx-\beta(t)\bx_1\|^2}{2\alpha(t)^2}\right\}\rho(\bx_1)\,\diff\bx_1}.
\end{eqnarray}
The integral in \eqref{velocity_int} is generally intractable and must be approximated
numerically. A notable exception is when $\rho$ is a Gaussian mixture, in which case it
admits a closed form. In the general case, \cite{ding2023samplingfollmerflow} proposed
Monte Carlo approximation using samples drawn directly from $N(0, \bI_d)$ as the
proposal. However, when $\rho$ differs substantially from a Gaussian, this approach
suffers from high estimation variance due to poor overlap between the Gaussian proposal
and the target \citep{vargas2022bayesian}. In our experiments, this instability was
severe: the method failed to produce reliable results even for low-dimensional multimodal
targets (see Section~\ref{numerical}). This motivates our approach, described next, which
replaces the Gaussian proposal with a sequence of annealed Langevin samplers that
progressively adapt to $\rho$.

\subsection{Annealed Langevin Monte Carlo}\label{almc}

\paragraph{From generative learning to unnormalized sampling.}
In the generative-learning setting, suppose we have access to $n$ i.i.d.\ samples
$\{\bx_1^{(i)}\}_{i=1}^n$ drawn from $\rho$.  The integral in \eqref{velocity_int} can
then be approximated by replacing the expectation over $\rho$ with its empirical
counterpart, $\int f(\bx_1)\rho(\bx_1)\,\diff\bx_1 \approx \frac{1}{n}\sum_{i=1}^n
f(\bx_1^{(i)})$ for any measurable $f$.  This yields the direct estimator
\begin{equation}
    \label{eq:velocity-field}
    \hat{\bv}(t, \bx) = \frac{\dot{\alpha}(t)}{\alpha(t)}\,\bx
      + \left( \dot{\beta}(t) - \frac{\dot{\alpha}(t)}{\alpha(t)}\beta(t) \right)
        \frac{\displaystyle\sum_{i=1}^n\bx_1^{(i)}
              \exp\!\left\{-\frac{\|\bx-\beta(t)\bx_1^{(i)}\|^2}{2\alpha(t)^2}\right\}}
             {\displaystyle\sum_{i=1}^n
              \exp\!\left\{-\frac{\|\bx-\beta(t)\bx_1^{(i)}\|^2}{2\alpha(t)^2}\right\}}.
\end{equation}
In practice, however, the target $\rho$ is available only up to a normalizing
constant—direct sampling from it is the very problem we aim to solve.  We therefore turn
to importance sampling: draw $\{\bx_1^{(i)}\}_{i=1}^n$ from a tractable proposal
$\hat\rho$ and attach importance weights $w^{(i)} = \rho(\bx_1^{(i)})/\hat\rho(\bx_1^{(i)})$.
This yields the importance-weighted velocity estimator
\begin{equation}
    \label{eq:velocity-field-IS}
    \hat{\bv}(t, \bx) = \frac{\dot{\alpha}(t)}{\alpha(t)}\,\bx
      + \left( \dot{\beta}(t) - \frac{\dot{\alpha}(t)}{\alpha(t)}\beta(t) \right)
        \frac{\displaystyle\sum_{i=1}^n \bx_1^{(i)}
              \exp\!\left\{-\frac{\|\bx-\beta(t)\bx_1^{(i)}\|^2}{2\alpha(t)^2}\right\}
              w^{(i)}}
             {\displaystyle\sum_{i=1}^n
              \exp\!\left\{-\frac{\|\bx-\beta(t)\bx_1^{(i)}\|^2}{2\alpha(t)^2}\right\}
              w^{(i)}}.
\end{equation}
The quality of \eqref{eq:velocity-field-IS} depends critically on the weight variance:
when $\hat\rho$ and $\rho$ have limited overlap, the weights degenerate and the estimator
becomes unreliable.  This places a strong premium on choosing a proposal $\hat\rho$ that
closely approximates $\rho$ while remaining easy to sample from.

\paragraph{Langevin dynamics as a sampling engine.}
A natural candidate for the proposal $\hat\rho$ is the stationary distribution of a
Langevin diffusion.  Recall that the Langevin SDE with invariant measure
$\rho \propto \ee^{-V}$ is
\begin{eqnarray}\label{langevin}
\diff\bY_s = -\nabla V(\bY_s)\,\diff s + \sqrt{2}\,\diff\bw_s,
\quad s \in [0,\infty),
\quad \bY_0 \sim \pi_0,
\end{eqnarray}
where $(\bw_s)_{s\geq 0}$ is a standard Brownian motion in $\bR^d$, $\pi_0$ is an
arbitrary initialization distribution, and we write $\bY_s$ (rather than $\bx_t$) to
distinguish this Langevin chain from the stochastic interpolant $\bx_t$ in
\eqref{eq:vec-interp}.  Under suitable regularity conditions, the law of $\bY_s$ converges
to $\rho$ as $s \to \infty$.  However, for complex multimodal targets, the energy
landscape induced by $V$ may contain many local minima separated by high barriers, causing
the diffusion to become trapped and mix poorly.

\paragraph{Annealing for improved mixing.}
To overcome this mixing problem, we employ the \emph{annealed Langevin diffusion} (ALD)
framework, which gradually transforms a tractable initial distribution $\pi_0$ into the
target $\rho$ through a sequence of intermediate distributions.  Formally, we introduce a
path $(\pi_t)_{t\in[0,1]}$ interpolating between $\pi_0$ and $\pi_1 = \rho$.  A common
choice is the geometric path $\pi_t(\bx) \propto \pi_0(\bx)^{1-\lambda(t)}\rho(\bx)^{\lambda(t)}$ with $\lambda(0)=0$ and $\lambda(1)=1$, though
other interpolations are possible.  Reparametrizing as $\tilde{\pi}_t = \pi_{t/T}$ for $t
\in [0,T]$ with $T > 0$ a sufficiently large time horizon, we consider the
time-inhomogeneous SDE
\begin{eqnarray}\label{annealed}
\diff\bY_t = \nabla\log\tilde{\pi}_t(\bY_t)\,\diff t + \sqrt{2}\,\diff\bw_t,
\quad t \in [0,T].
\end{eqnarray}
Intuitively, when the target $\tilde{\pi}_t$ evolves slowly enough, the law of $\bY_t$
closely tracks $\tilde{\pi}_t$, so that $\bY_T$ provides an approximate sample from $\rho$.
The annealing path smooths the energy landscape in the early stages, facilitating
exploration across modes.

\paragraph{From ALD to a practical algorithm: ULA with Jarzynski correction.}
Two difficulties arise when discretizing \eqref{annealed}.  First, iterating the
unadjusted Langevin algorithm (ULA) under a slowly varying potential does \emph{not}
produce exact samples from the instantaneous target $\tilde{\pi}_t$.  Second, this bias
cannot be removed by a Metropolis–Hastings correction, because it stems not only from time
discretization but also from the continuous evolution of $(\tilde{\pi}_t)$.  To address both issues, \cite{Carbone, cuin2025learning} introduce the \emph{Jarzynski-adjusted Langevin algorithm} (JALA), which runs ULA under the time-dependent potential sequence — accepting the induced bias — and corrects the resulting particles via exponential reweighting derived from the Jarzynski equality. We extend this framework to general time-inhomogeneous Markov chains and refine the correction by adopting variance-minimizing importance weights in place of the standard Jarzynski weights. The remainder of this section makes this precise.

\subsection{Jarzynski Adjustment}\label{sec:sbp}

As noted above, ULA iterates under a time-varying potential do not produce exact samples
from the instantaneous target $\tilde{\pi}_t$—the law of the chain generally lags behind
due to non-equilibrium bias.  We correct for this using the Jarzynski equality from
nonequilibrium statistical physics \citep{PhysRevLett.78.2690}.

We state the result for a general time-inhomogeneous Markov chain.  Let
$\mu_k(\bx,\by)$ denote the forward transition kernel from $\bx$ to $\by$ at step $k$,
and let $\nu_k(\bx_k,\bx_{k-1})$ be a corresponding backward kernel satisfying
$\int\nu_k(\bx_k,\bx_{k-1})\,\diff\bx_{k-1}=1$.  We associate each step $k$ with a
target density
\begin{eqnarray}\nn
\tilde{\pi}_k(\bx)=Z_k^{-1}\ee^{-V_k(\bx)},
\qquad Z_k=\int \ee^{-V_k(\bx)}\,\diff\bx.
\end{eqnarray}
The following proposition extends Proposition 1 of \cite{cuin2025learning} to general transition kernels, providing a reweighting formula that exactly corrects the discrepancy between the law of the chain at step $k$ and the target $\tilde{\pi}_k$.
\begin{proposition}\label{prop1}
Let $\{\mu_k\}$ be any sequence of time-dependent Markov transition kernels.  Define the
chain $(\bx_k,A_k)\in\bR^d\times\bR$ recursively by
\begin{eqnarray}\label{iteration}
\left\{\begin{array}{ll}
\bx_{k} \sim \mu_{k}(\bx_{k-1},\cdot), & \bx_0\sim\pi_{0},\\[4pt]
A_{k} = A_{k-1} + V_{k-1}(\bx_{k-1}) - V_{k}(\bx_{k})
        + \log\dfrac{\nu_{k}(\bx_{k},\bx_{k-1})}{\mu_{k}(\bx_{k-1},\bx_{k})},
& A_0=0.
\end{array}\right.
\end{eqnarray}
Then, for any $k\in\bN$ and any measurable function $f:\bR^d\rightarrow\bR$,
\begin{eqnarray}\label{adjust}
\bE_{k}[f(\bx_k)]=\frac{\bE[f(\bx_k)\ee^{A_k}]}{\bE[\ee^{A_k}]},
\qquad Z_{k}=Z_{0}\,\bE[\ee^{A_k}],
\end{eqnarray}
where $\bE_{k}$ denotes expectation with respect to $\tilde{\pi}_k$, and the expectations
on the right-hand side are taken over the joint law of $(\bx_k,A_k)$ induced by the
iteration \eqref{iteration}.
\end{proposition}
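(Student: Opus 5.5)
This is the standard annealed-importance-sampling argument: write the law of the whole path under the forward chain, identify $\ee^{A_k}$ as a ratio of path densities, and integrate out the earlier coordinates using the normalization of the backward kernels.

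\textbf{Step 1: telescoping $A_k$.} Unrolling \eqref{iteration} gives
\begin{eqnarray}\nn
\ee^{A_k}=\ee^{V_0(\bx_0)-V_k(\bx_k)}\prod_{j=1}^k\frac{\nu_j(\bx_j,\bx_{j-1})}{\mu_j(\bx_{j-1},\bx_j)}.
\end{eqnarray}
Here I take $\pi_0=\tilde\pi_0=Z_0^{-1}\ee^{-V_0}$; this identification of the initial law with step-$0$ target is implicit in the statement and I will state it explicitly. Since $A_k$ is a deterministic function of the path $(\bx_0,\dots,\bx_k)$, the pair $(\bx_k,A_k)$ is determined by the path.

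\textbf{Step 2: change of measure on path space.} The joint density of $(\bx_0,\dots,\bx_k)$ under the forward chain is
\begin{eqnarray}\nn
P_k(\bx_{0:k})=\pi_0(\bx_0)\prod_{j=1}^k\mu_j(\bx_{j-1},\bx_j).
\end{eqnarray}
Using $\pi_0(\bx_0)\ee^{V_0(\bx_0)}=Z_0^{-1}$, the forward kernels cancel:
\begin{eqnarray}\nn
P_k(\bx_{0:k})\,\ee^{A_k}=Z_0^{-1}\ee^{-V_k(\bx_k)}\prod_{j=1}^k\nu_j(\bx_j,\bx_{j-1})
=\frac{Z_k}{Z_0}\,\tilde\pi_k(\bx_k)\prod_{j=1}^k\nu_j(\bx_j,\bx_{j-1}).
\end{eqnarray}

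\textbf{Step 3: marginalize backwards.} Multiply by $f(\bx_k)$ and integrate over $\bx_0,\dots,\bx_{k-1}$ in the order $\bx_0$, then $\bx_1$, up to $\bx_{k-1}$. At each stage, $\int\nu_j(\bx_j,\bx_{j-1})\,\diff\bx_{j-1}=1$ removes one factor. This yields
\begin{eqnarray}\nn
\bE[f(\bx_k)\ee^{A_k}]=\frac{Z_k}{Z_0}\,\bE_k[f(\bx_k)].
\end{eqnarray}
Taking $f\equiv1$ gives $Z_k=Z_0\,\bE[\ee^{A_k}]$. Dividing the two identities gives the reweighting formula in \eqref{adjust}.

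\textbf{Main obstacle: technical conditions.}
\begin{itemize}
\item The kernels must have densities, and absolute continuity is needed: $\mu_j>0$ wherever $\nu_j>0$, so the log-ratio is well defined. Otherwise the integrals are restricted to the support of the forward path law, and the backward normalization step must be checked there.
\item Interchanging the integrals needs justification. This follows from Tonelli for $f\ge0$ and then by splitting $f=f^+-f^-$, assuming $f\in L^1(\tilde\pi_k)$.
\end{itemize}
No earlier result from the paper is needed beyond these definitions; the statement is self-contained.
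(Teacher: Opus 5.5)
Your proposal is correct and follows the paper's proof essentially step for step. You telescope $A_k$, multiply the forward path density by $\ee^{A_k}$ so that the forward kernels cancel, and then integrate out $\bx_0,\dots,\bx_{k-1}$ using $\int\nu_j(\bx_j,\bx_{j-1})\,\diff\bx_{j-1}=1$, before setting $f\equiv 1$ and dividing. The paper also uses the identification $\pi_0=\tilde\pi_0$, the absolute-continuity condition and the integrability of $f$, but only implicitly; making them explicit, as you do, is a welcome refinement rather than a departure.
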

See Appendix~\ref{proof1} for a proof. Proposition~\ref{prop1} holds for \emph{any} choice of forward and backward kernels for which the Radon–Nikodym derivative $P_k / Q_k$ is well defined, where $P_k = \pi_0(\bx_0)\prod_{q=1}^k \mu_q(\bx_{q-1}, \bx_q)$ and $Q_k = \tilde{\pi}_k(\bx_k)\prod_{q=0}^{k-1} \nu_q(\bx_{q+1}, \bx_q)$. The results of \cite{Carbone, Carbone2024, cuin2025learning, Carbone2025} are recovered as special cases; in particular, \cite{Carbone2025} also accommodates a general forward kernel.  The interacting particle Langevin algorithm \citep{akyildiz2025interactingparticlelangevinalgorithm, marks2025learninglatentenergybasedmodels} shares a similar spirit and likewise falls within this framework as a special case. Moreover, the well-known annealed importance sampling (AIS) method also arises as a special case, obtained by choosing the backward kernels as
\begin{eqnarray}\label{eq:ais-backward}
\nu^{\mathrm{ais}}_k(\bx_k, \bx_{k-1})
= \frac{\tilde{\pi}_{k+1}(\bx_k)\mu_{k+1}(\bx_k, \bx_{k+1})}{\tilde{\pi}_{k+1}(\bx_{k+1})}.
\end{eqnarray}
Although the AIS backward kernels \eqref{eq:ais-backward} are analytically convenient, they are generally suboptimal with respect to the variance of the resulting estimator \citep{doucet2022scorebaseddiffusionmeetsannealed}. The following proposition, which parallels Proposition 1 of \cite{doucet2022scorebaseddiffusionmeetsannealed}, characterizes the optimal backward kernel — that is, the one minimizing the variance of the reweighting estimator \eqref{adjust}.
\begin{proposition}\label{prop2}
Among all choices of backward kernel $\nu_k$ in \eqref{iteration}, the one that minimizes
the variance of $\exp(A_k)$ is
\begin{eqnarray}\label{back}
\nu_k^{\mathrm{opt}}(\bx_k,\bx_{k-1})
  = \frac{p_{k-1}(\bx_{k-1})\,\mu_k(\bx_{k-1},\bx_k)}{p_k(\bx_k)},
\end{eqnarray}
where $p_k(\bx_k) = \int\pi_0(\bx_0)\prod_{q=1}^k\{\mu_q(\bx_{q-1},\bx_q)\,\diff\bx_{q-1}\}$
denotes the marginal density of $\bx_k$ induced by the forward iteration \eqref{iteration}.
\end{proposition}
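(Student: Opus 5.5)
Since $\bE[\ee^{A_k}]=Z_k/Z_0$ by Proposition~\ref{prop1} for every admissible choice of backward kernels, minimizing the variance of $\ee^{A_k}$ is the same as minimizing its second moment $\bE[\ee^{2A_k}]$. I will therefore express $\ee^{A_k}$ as a Radon--Nikodym derivative on path space and minimize its second moment by a Cauchy--Schwarz (equivalently Jensen) argument.

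\emph{Step 1: path-space form of the weight.} Unrolling the recursion \eqref{iteration} telescopes the potential terms:
\begin{equation*}
A_k = V_0(\bx_0)-V_k(\bx_k)+\sum_{q=1}^k\log\frac{\nu_q(\bx_q,\bx_{q-1})}{\mu_q(\bx_{q-1},\bx_q)} .
\end{equation*}
Using $\pi_0=Z_0^{-1}\ee^{-V_0}$ and $\tilde\pi_k=Z_k^{-1}\ee^{-V_k}$, this becomes
\begin{equation*}
\ee^{A_k}=\frac{Z_k}{Z_0}\,\frac{Q_k(\bx_{0:k})}{P_k(\bx_{0:k})},
\end{equation*}
where $P_k=\pi_0\prod_q\mu_q$ is the forward path law and $Q_k=\tilde\pi_k(\bx_k)\prod_q\nu_q(\bx_q,\bx_{q-1})$ is the backward path law. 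Only $Q_k$ depends on the $\nu$'s.

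\emph{Step 2: factorize the forward law backwards in time.} By the Markov property of the forward chain, $P_k(\bx_{0:k})=p_k(\bx_k)\prod_{q=1}^k P(\bx_{q-1}\mid\bx_q)$. The reverse-time conditional is
\begin{equation*}
P(\bx_{q-1}\mid \bx_q)=\frac{p_{q-1}(\bx_{q-1})\,\mu_q(\bx_{q-1},\bx_q)}{p_q(\bx_q)}=\nu^{\mathrm{opt}}_q(\bx_q,\bx_{q-1}).
\end{equation*}
The key point is that, given $\bx_q$, the reversed process depends only on $\bx_q$ and not on later states; this follows from the Markov property and Bayes' rule. Hence
\begin{equation*}
\frac{Q_k}{P_k}=\frac{\tilde\pi_k(\bx_k)}{p_k(\bx_k)}\prod_{q=1}^k\frac{\nu_q(\bx_q,\bx_{q-1})}{\nu^{\mathrm{opt}}_q(\bx_q,\bx_{q-1})}.
\end{equation*}

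\emph{Step 3: minimize the second moment.} Write $R=\prod_q \nu_q/\nu_q^{\mathrm{opt}}$. Conditioning on $\bx_k$,
\begin{equation*}
\bE[\ee^{2A_k}]=\Big(\frac{Z_k}{Z_0}\Big)^2\,\bE\Big[\frac{\tilde\pi_k(\bx_k)^2}{p_k(\bx_k)^2}\,\bE\big[R^2\mid\bx_k\big]\Big].
\end{equation*}
Under $P_k(\cdot\mid\bx_k)=\prod_q\nu^{\mathrm{opt}}_q$, we have $\bE[R\mid\bx_k]=\int\prod_q\nu_q\,\diff\bx_{0:k-1}=1$, because each $\nu_q$ integrates to one in its second argument and the integration can be done sequentially from $\bx_0$ upward. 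Jensen (or Cauchy--Schwarz) then gives $\bE[R^2\mid\bx_k]\ge 1$, with equality iff $R=1$ almost surely, i.e.\ $\nu_q=\nu_q^{\mathrm{opt}}$.

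\emph{Conclusion and main obstacle.} At the optimum the second moment reduces to $(Z_k/Z_0)^2\,\bE_{p_k}[(\tilde\pi_k/p_k)^2]$, which is exactly the standard importance-sampling weight $\tilde\pi_k/p_k$, as the introduction claims. The delicate step is Step 2: justifying the backward factorization with the optimal kernels and keeping the index conventions straight, since $Q_k$ in the text is written with shifted indices. Integrability conditions, such as the absolute continuity mentioned after Proposition~\ref{prop1}, need to be assumed.
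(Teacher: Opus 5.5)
Your proposal is correct and follows essentially the same route as the paper. Both arguments condition on $\bx_k$, observe that $\nu^{\mathrm{opt}}$ turns $\ee^{A_k}$ into the deterministic function $(Z_k/Z_0)\,\tilde{\pi}_k(\bx_k)/p_k(\bx_k)$, and conclude because the remaining term does not depend on the backward kernel; your Jensen bound $\bE[R^2\mid\bx_k]\ge 1$ plays the role of the nonnegative expected conditional variance in the paper's law-of-total-variance decomposition. Your version is slightly more complete: the identity $\bE[R\mid\bx_k]=1$ under the forward law justifies the paper's unproved claim that the first variance term is kernel-independent, and the ``delicate'' Step 2 reduces to the telescoping identity $p_k(\bx_k)\prod_{q}\nu_q^{\mathrm{opt}}(\bx_q,\bx_{q-1})=\pi_0(\bx_0)\prod_q\mu_q(\bx_{q-1},\bx_q)$.
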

In words, the optimal backward kernel \eqref{back} corresponds to the time-reversal of
the forward chain: $\nu_k^{\mathrm{opt}}$ is initialized from $\tilde{\pi}_k$ and runs
the reversed dynamics of $\{\mu_q\}$.  In the ideal case $p_k = \tilde{\pi}_k$ the
optimal and forward kernels coincide, since the backward decomposition of $Q$ reduces to
the forward process.  Substituting \eqref{back} into $A_k$ yields a particularly clean
expression:
\begin{eqnarray}\nn
\exp(A_k)
  &=& \ee^{-V_k(\bx_k)+V_{0}(\bx_0)}
      \prod_{q=1}^k \frac{\nu_q(\bx_q,\bx_{q-1})}{\mu_q(\bx_{q-1},\bx_q)} \\\nn
  &=& \ee^{-V_k(\bx_k)+V_{0}(\bx_0)}
      \prod_{q=1}^k
      \frac{p_{q-1}(\bx_{q-1})\,\mu_q(\bx_{q-1},\bx_q)}
           {p_q(\bx_q)\,\mu_q(\bx_{q-1},\bx_q)}
   = \frac{Z_k}{Z_0}\cdot\frac{\tilde{\pi}_k(\bx_k)}{p_k(\bx_k)},
\end{eqnarray}
so \eqref{adjust} simplifies to the standard importance-sampling estimator with marginal
weights $w_k = \tilde{\pi}_k(\bx_k)/p_k(\bx_k)$:
\begin{eqnarray}\nn
\bE_{k}[f(\bx_k)] = \frac{\bE\!\left[f(\bx_k)\,w_k\right]}{\bE[w_k]},
\qquad w_k = \frac{\tilde{\pi}_k(\bx_k)}{p_k(\bx_k)}.
\end{eqnarray}

\paragraph{Estimating the forward density $p_k$.}
To make the importance weight $w_k$ computable, we need to evaluate $p_k(\bx_k)$—the
marginal density of $\bx_k$ under the forward chain.  By the Markov structure,
\begin{eqnarray}\nn
p_k(\bx_k) = \int p_{k-1}(\bx_{k-1})\,\mu_k(\bx_{k-1},\bx_k)\,\diff\bx_{k-1}.
\end{eqnarray}
Given $n$ particles $\{\bx_{k-1}^{(i)}\}_{i=1}^n$ from step $k-1$, this density can be
approximated as
\begin{eqnarray}\label{density}
\hat{p}_k(\bx_k) \approx \frac{1}{n}\sum_{i=1}^n \mu_k(\bx^{(i)}_{k-1},\bx_k).
\end{eqnarray}
Proposition~\ref{prop1} applies to any forward kernel $\mu_k$—including MCMC kernels, ULA,
or deterministic maps.  For our method, we use the Euler–Maruyama discretization of the
annealed Langevin SDE \eqref{annealed}, which updates the Langevin chain as
\begin{eqnarray}\label{discrete}
\bx_k = \bx_{k-1} - \delta_k\,\nabla V_k(\bx_{k-1}) + \sqrt{2\delta_k}\,\bepsilon_k,
\qquad \bepsilon_k \sim N(0,\bI_d),
\end{eqnarray}
where $\delta_k > 0$ is the step size and
\begin{equation}
    \nabla V_k(\bx_{k-1}) = (1 - \lambda(t_k))\, \bx_{k-1} - \lambda(t_k) \, \nabla \log \rho(\bx_{k-1}),
    \label{eq:drift}
\end{equation}
At $t_k = 0$ this reduces to $\bx_{k-1}$ (the gradient of the standard Gaussian potential $-\log\pi_0$), and at $t_k = 1$ it equals $-\nabla\log\rho(\bx_{k-1})$ (the gradient of the target potential).  Here we revert to the notation $\bx_k$ for the discretized Langevin iterates; the interpolant process $\bx_t$ from Section~\ref{condsb} does not appear in the discretization and there is no ambiguity.  The update \eqref{discrete} induces the Gaussian transition density
\begin{eqnarray}\label{ula}
\mu_k(\bx_{k-1},\bx_k)
= \frac{1}{(4\pi\delta_k)^{d/2}}
  \exp\!\left(
  -\frac{1}{4\delta_k}
  \left|\bx_k - \bx_{k-1} + \delta_k\,\nabla V_k(\bx_{k-1})\right|^2
  \right).
\end{eqnarray}
Substituting \eqref{ula} into \eqref{density}, the forward density at $\bx_k$ is
estimated as
\begin{eqnarray}\label{density2}
\hat{p}_k(\bx_k)
\approx \frac{1}{n}\sum_{i=1}^n
\frac{1}{(4\pi\delta_k)^{d/2}}
\exp\!\left(
-\frac{1}{4\delta_k}
\left|\bx_k - \bx_{k-1}^{(i)} + \delta_k\,\nabla V_k(\bx_{k-1}^{(i)})\right|^2
\right),
\end{eqnarray}
and the corresponding importance weight for particle $\bx_k$ is
\begin{eqnarray}\label{eweight}
\hat{w}_k = \frac{\tilde{\pi}_k(\bx_k)}{\hat{p}_k(\bx_k)}.
\end{eqnarray}

\section{Annealed Langevin Diffusion ODE Flow Sampling}\label{sec:algorithm}

We now describe the complete sampling procedure, which combines the two components
developed in Section~\ref{method}: (i) an annealed Langevin Monte Carlo (ALMC) phase that
generates importance-weighted particles approximating the target $\rho$, and (ii) an ODE
phase that uses those particles to estimate the velocity field and simulate the probability
flow \eqref{eq:ode-ivp0}.  The two phases are summarized in Algorithms~\ref{alg1}
and~\ref{alg2}, respectively.

\paragraph{Phase 1: ALMC sampling and reweighting (Algorithm~\ref{alg1}).}
We run $n$ particles forward through $K$ steps of ULA under the annealed potential
sequence $\{V_k\}_{k=0}^{K-1}$.  At each step $k$, importance weights $\{w_k^{(i)}\}$
are computed via \eqref{eweight} to correct for the bias of ULA relative to
$\tilde{\pi}_k$.  Since this is a sequential Monte Carlo (SMC) scheme, the normalized
log-weights can degenerate over iterations—a phenomenon tracked by the effective sample
size
\[
  \mathrm{ESS}_k
    = \frac{\bigl(\sum_{i=1}^n w_k^{(i)}\bigr)^2}{\sum_{i=1}^n (w_k^{(i)})^2}
    \in [1, n].
\]
Initially $\mathrm{ESS}_0 = n$, and it decreases as the annealing proceeds.  When
$\mathrm{ESS}_k$ falls below a threshold $C$, a resampling step is triggered to replenish
particle diversity.

\begin{algorithm}[H]
	\caption{ALMC: Generating weighted particles from the target $\rho$}
    \label{alg1}
	\begin{algorithmic}[1]
\State \textbf{Input:} potential sequence $\{V_k\}_{k=0}^{K-1}$, step size $\{\delta_k\}_{k=1}^{K} > 0$,
       number of particles $n \in \bN$, number of steps $K \in \bN$, ESS threshold $C > 0$.
\State \textbf{Initialize:} sample $\{\bx_0^{(i)}\}_{i=1}^n \sim N(0,\bI_d)$; set $w_0^{(i)} = 1$ for all $i$.
\For{$k = 0, 1, \ldots, K-1$}
  \State Draw $\bepsilon_{k}^{(i)} \sim N(0,\bI_{d})$ independently for each $i$.
  \State Update: $\bx_{k+1}^{(i)} = \bx_{k}^{(i)} - \delta_k\,\nabla V_k(\bx_k^{(i)}) + \sqrt{2\delta_k}\,\bepsilon_{k}^{(i)}$.
  \State Compute importance weights $\{w_{k+1}^{(i)}\}$ via \eqref{eweight}.
  \State If $\mathrm{ESS}_{k+1} < C$, resample $\{\bx_{k+1}^{(i)}\}$ with probabilities proportional to $\{w_{k+1}^{(i)}\}$ and reset weights to $1/n$.
\EndFor
\State \textbf{Output:} particles $\{\bx_{K}^{(i)}\}_{i=1}^n$ with importance weights $\{w_{K}^{(i)}\}_{i=1}^n$.
\end{algorithmic}
\end{algorithm}

\paragraph{Phase 2: ODE flow sampling (Algorithm~\ref{alg2}).}
Given the weighted particles $\{(\bx_K^{(i)}, w_K^{(i)})\}_{i=1}^n$ from Phase~1, we
simulate the probability-flow ODE \eqref{eq:ode-ivp0} via Euler discretization.  The
velocity field at each step is estimated using the importance-weighted formula
\eqref{eq:velocity-field-IS} with the particles from Algorithm~\ref{alg1} serving as the
proposal samples.  To ensure numerical stability near $t = 1$ (where
expression~\eqref{eq:vf-expect-single-rmk} becomes singular), we stop at an
early-stopping time $T_\mathrm{end} = 1 - \epsilon$ for a small $\epsilon > 0$.  With $M$
discretization steps and step size $h = (T_\mathrm{end} - T_0)/M$, the update rule for
each new test particle $\bx_m^{(j)}$ is
\[
  \bx_{m+1}^{(j)} = \bx_{m}^{(j)} + h\,\hat{\bv}(t_m, \bx_m^{(j)}),
  \quad t_m = T_0 + mh, \quad 0 \le m < M,
\]
initialized from $\bx_0^{(j)} \sim N(0, \bI_d)$.  Under mild regularity conditions, the
distribution of $\bx_M^{(j)}$ approximates the target $\rho$ when the velocity estimation
error is sufficiently small.

\begin{algorithm}[H]
	\caption{ODE flow sampling using ALMC particles}
    \label{alg2}
	\begin{algorithmic}[1]
\State \textbf{Input:} weighted particles $\{(\bx_{K}^{(i)}, w_{K}^{(i)})\}_{i=1}^n$
       from Algorithm~\ref{alg1}; initial time $T_0=\epsilon$; terminal time
       $T_\mathrm{end} = 1 - \epsilon$; number of new test particles $n$; number of ODE
       steps $M$.
\State Compute step size $h = (T_\mathrm{end} - T_0)/M$.
\State Generate time grid: $t_m = T_0 + mh$ for $m = 0, 1, \ldots, M$.
\State Initialize test particles $\{\bx_0^{(j)}\}_{j=1}^N \sim N(0,\bI_d)$.
\For{$m = 0, 1, \ldots, M-1$}
  \State Estimate the velocity field at each test particle using \eqref{eq:velocity-field-IS}
         with ALMC samples $\{(\bx_K^{(i)}, w_K^{(i)})\}$ as the proposal:
         \[
           \hat{\bv}(t_m, \bx_m^{(j)})
             = \frac{\dot\alpha(t_m)}{\alpha(t_m)}\bx_m^{(j)}
               + c(t_m)
                 \frac{\sum_{i=1}^n \bx_K^{(i)}\,
                       g(t_m,\bx_m^{(j)},\bx_K^{(i)})\,w_K^{(i)}}
                      {\sum_{i=1}^n
                       g(t_m,\bx_m^{(j)},\bx_K^{(i)})\,w_K^{(i)}},
         \]
         where $c(t) = \dot\beta(t) - \frac{\dot\alpha(t)}{\alpha(t)}\beta(t)$ and
         $g(t,\bx,\by) = \exp\!\bigl(-\|\bx - \beta(t)\by\|^2 / (2\alpha(t)^2)\bigr)$.
  \State Update: $\bx_{m+1}^{(j)} = \bx_{m}^{(j)} + h\,\hat{\bv}(t_m, \bx_m^{(j)})$.
\EndFor
\State \textbf{Output:} approximate target samples $\{\bx_{M}^{(j)}\}_{j=1}^N$.
\end{algorithmic}
\end{algorithm}

\section{Error Analysis}\label{mse}

In this section we derive an error bound for our proposed method, focusing on the
mean-squared error (MSE) of the velocity-field estimator \eqref{eq:velocity-field-IS}.

\paragraph{Boundary behavior.}
Since $\alpha(1) = 0$ and $\beta(0) = 0$, the velocity-field
expression~\eqref{eq:vf-expect-single-rmk} has apparent singularities at $t = 0$ (via
the factor $1/\alpha(t)$) and at $t = 1$.  However, as shown in
\citep{ding2023samplingfollmerflow}, the velocity field extends continuously to the closed
interval $[0, 1]$ via the equivalent representation
\begin{equation*}
    \bv(t, \bx) = \frac{\dot{\beta}(t)}{\beta(t)}\,\bx
      + \frac{\alpha(t)\bigl[\alpha(t)\dot{\beta}(t)-\dot{\alpha}(t)\beta(t)\bigr]}{\beta(t)^2}
        \bE\bigl[\nabla\log\rho(\bx_1) \mid \bx_t = \bx\bigr],
    \quad t \in (0, 1].
\end{equation*}
In practice, to mitigate numerical instability near both endpoints, we restrict ODE
integration to the interval $[T_0, T_\mathrm{end}] \subset (0, 1)$ with $T_0$ and
$1 - T_\mathrm{end}$ both small and positive.

\paragraph{Sources of error.}
The global error of our method has two sources.  The first is the discrepancy between the
continuous ODE solution and its Euler discretization.  The second arises from approximating
the true velocity field by the importance-weighted Monte Carlo
estimator~\eqref{eq:velocity-field-IS}.  The discretization error has been analyzed in
\cite{ding2023samplingfollmerflow}; we therefore focus on the sensitivity of the ODE
system to perturbations in the velocity field, which quantifies the error introduced by
the Monte Carlo approximation.

\begin{assump}\label{assum1}
The target measure $\rho$ has a finite third moment and is absolutely continuous with
respect to the standard Gaussian measure.
\end{assump}

\begin{assump}\label{assum2}
There exists a random variable $\by \in \bR^d$ with law $p_{\by}$ satisfying
\begin{equation}\nn
\mathrm{supp}(p_{\by}) \subseteq B(0,R) := \{\bx \in \bR^d : \|\bx\| \le R\},
\quad R > 0,
\end{equation}
such that $\bx_1 = \by + \sigma\bepsilon$ with $\sigma^2 > 0$ and
$\bepsilon \sim N(0,\bI_d)$ independent of $\by$, i.e., the target density has the
Gaussian-mixture representation
\begin{equation}\label{assump}
\rho(\bx_1) = \int P_{\sigma^2}(\bx_1 - \by)\, p_{\by}(\by)\, \diff\by,
\end{equation}
where $P_{\sigma^2}$ denotes the Gaussian density with mean zero and variance $\sigma^2$.
\end{assump}

Under Assumption~\ref{assum1}, it was shown in
\cite{duan2026samplingstochasticinterpolantslangevinbased} that for every $t \in [0,1]$
the velocity field satisfies $\|\bv(t, \bx_t)\|_2 \le B(1 + \|\bx_t\|)$ and
$\|\nabla \bv(t, \bx_t)\|_{\mathrm{op}} \le G$, where $B>0$ and $G>0$ are constants
depending only on $\sigma$ and $R$.  Furthermore, Assumptions~\ref{assum1}
and~\ref{assum2} together imply that $\rho$ satisfies a log-Sobolev inequality, which
guarantees stability of the Euler-discretized ODE flow: samples produced by the Euler
approximation of \eqref{eq:ode-ivp0} using the true velocity field converge to $\rho$ as
the step size $h \to 0$.  Consequently, the pushforward distribution of the Euler scheme
approximates $\rho$ whenever the velocity estimation error is sufficiently small.

The following proposition shows that the Monte Carlo approximation of the velocity field
achieves the standard $\mathcal{O}(1/n)$ MSE rate as the particle count $n$ grows.

\begin{proposition}\label{prop3}
For each step $k \ge 0$, let $t_k$ denote the corresponding time grid point and set
$\alpha_k = \alpha(t_k)$, $\beta_k = \beta(t_k)$.  Define the Gaussian kernel
\begin{equation}\label{eq:gk-def}
    g_k(\bx, \bx_1)
    = \exp\!\left\{-\frac{\|\bx - \beta_k\bx_1\|^2}{2\alpha_k^2}\right\},
    \quad \bx,\bx_1 \in \bR^d.
\end{equation}
Let $\{\bx_1^{(i)}\}_{i=1}^n$ be the output particles of Algorithm~\ref{alg1} with
corresponding importance weights
\begin{equation}\label{eq:w-def}
    w(\bx_1) = \frac{\rho(\bx_1)}{\hat{p}_K(\bx_1)},
\end{equation}
where $\hat{p}_K$ is the estimated marginal density at step $K$ given by
\eqref{density2}.  Define the population velocity and its Monte Carlo approximation at a
fixed point $\bx \in \bR^d$ by
\begin{equation}\label{eq:v-star-hat}
\bv^\star_k(\bx)
  = \frac{\bE\bigl[\bx_1\, g_k(\bx,\bx_1)\, w(\bx_1)\bigr]}
         {\bE\bigl[g_k(\bx,\bx_1)\, w(\bx_1)\bigr]},
\qquad
\hat{\bv}_k(\bx)
  = \frac{\displaystyle\sum_{i=1}^n \bx_1^{(i)}\,
           g_k\!\left(\bx,\bx_1^{(i)}\right) w\!\left(\bx_1^{(i)}\right)}
         {\displaystyle\sum_{i=1}^n
           g_k\!\left(\bx,\bx_1^{(i)}\right) w\!\left(\bx_1^{(i)}\right)},
\end{equation}
where the expectation in $\bv^\star_k$ is taken over the marginal law of $\bx_k$ defined
by the iteration~\eqref{discrete}.  Under Assumptions~\ref{assum1} and~\ref{assum2}, for
any sample size $n \ge 1$ and step size satisfying $\delta < \min(\epsilon, \sigma^2)$,
we have
\begin{equation}\label{eq:mse-bound}
\bE\bigl\|\hat{\bv}_k(\bx) - \bv^\star_k(\bx)\bigr\|^2
  \le \mathcal{O}\!\left(\frac{1}{n}\right),
\end{equation}
for all $k \ge 0$.
\end{proposition}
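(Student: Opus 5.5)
This is a ratio estimator, so I would bound its error with the standard delta-method decomposition. Fix $k$ and $\bx$, and abbreviate $G_i = g_k(\bx,\bx_1^{(i)})\,w(\bx_1^{(i)})$. Define the empirical averages
\[
N_n = \frac1n\sum_{i=1}^n \bx_1^{(i)} G_i, \qquad D_n = \frac1n\sum_{i=1}^n G_i,
\]
and their means $N=\bE[\bx_1 g_k w]$ and $D=\bE[g_k w]$, so that $\bv^\star_k = N/D$. The basic identity is
\[
\hat{\bv}_k - \bv^\star_k = \frac{N_n - \bv^\star_k D_n}{D_n} = \frac{1}{D_n}\cdot\frac1n\sum_{i=1}^n \bigl(\bx_1^{(i)}-\bv^\star_k\bigr)G_i .
\]
The numerator is an average of centered terms: $\bE[(\bx_1-\bv^\star_k)g_k w]=0$ by the definition of $\bv^\star_k$. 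I will treat the particles as i.i.d. draws from the marginal law (the resampling-free case). I will also treat $\hat p_K$ as a fixed function when taking expectations; where this is too crude, I condition on the step-$(K-1)$ particles. Under these conventions the second moment of the numerator equals $n^{-1}\bE\|(\bx_1-\bv^\star_k)g_k w\|^2$. So the whole task reduces to two things: (i) a finite second moment $\bE[\|\bx_1\|^2 g_k^2 w^2]<\infty$, and (ii) control of $D_n$ away from zero.

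For (i), I would use $g_k\le 1$, so it suffices to bound $\bE[(1+\|\bx_1\|^2)w^2]$. The key point is that $w=\rho/\hat p_K$ has a controlled ratio. By \eqref{density2}, $\hat p_K$ is a mixture of Gaussians of variance $2\delta$. By Assumption~\ref{assum2}, $\rho$ is a mixture of Gaussians of variance $\sigma^2$ with centers in $B(0,R)$. The condition $\delta<\sigma^2$ (together with $\delta<\epsilon$) is exactly what should make the relevant integrals converge. I would compute
\[
\bE[w^2(1+\|\bx_1\|^2)] = \int \frac{\rho^2}{\hat p_K}\,(1+\|\bx_1\|^2),
\]
and lower-bound $\hat p_K(\bx_1)$ by a single Gaussian term, or by Jensen applied to the mixture average. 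This gives a Gaussian tail of the form $\exp(-\|\bx_1\|^2/(4\delta)+\cdots)$ in the denominator. Meanwhile $\rho^2$ decays like $\exp(-\|\bx_1\|^2/\sigma^2)$ uniformly over centers in $B(0,R)$. The quadratic coefficient is then $1/\sigma^2 - 1/(4\delta)$, which is positive when $\delta<\sigma^2/4$. I expect that a careful version of this step, possibly with constants adjusted using the finite third moment from Assumption~\ref{assum1}, yields a constant $C_1(\sigma,R,\delta,d)$ bounding the second moment.

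For (ii), $D>0$ holds because $g_k>0$ and $\rho>0$, and $D_n$ concentrates around $D$ with variance $O(1/n)$ by the same second-moment bound. I would split on the event $\{D_n\ge D/2\}$. On this event, $\|\hat{\bv}_k-\bv^\star_k\|^2\le 4D^{-2}\|N_n-\bv^\star_k D_n\|^2$, which contributes $O(1/n)$ in expectation. On the complement, I would use the crude bound that $\hat{\bv}_k$ is a convex combination of the particles $\bx_1^{(i)}$, so $\|\hat{\bv}_k-\bv^\star_k\|^2\le 2\max_i\|\bx_1^{(i)}\|^2+2\|\bv^\star_k\|^2$. Combining this with Cauchy--Schwarz and Chebyshev, $\mathbb{P}(D_n<D/2)\le 4\,\mathrm{Var}(G)/(nD^2)$, gives a contribution of order $O(1/n)$ times a moment factor.

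The main obstacle is this complement term. The factor $\bE\max_i\|\bx_1^{(i)}\|^4$ grows (logarithmically) with $n$, and a Cauchy--Schwarz pairing yields only $n^{-1/2}$. I would first try a Bernstein-type or fourth-moment Markov bound on $\mathbb{P}(D_n<D/2)$ of order $n^{-2}$; this requires $\bE G^4<\infty$, which the Gaussian-ratio computation above provides under a slightly stronger smallness condition on $\delta$. Failing that, I would accept the standard delta-method statement that the leading term is $n^{-1}\bE\|(\bx_1-\bv^\star_k)g_kw\|^2/D^2$ with a higher-order remainder, uniformly in $k$ since the constants depend only on $\sigma$, $R$ and $\delta$.
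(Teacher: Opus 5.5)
Your ratio decomposition and the split on $\{D_n\ge D/2\}$ are sound. However, step (i), on which everything rests, fails because the tail comparison is inverted. Here $\hat p_K$ is a mixture of variance-$2\delta$ Gaussians centered at $m_i=\bx_{K-1}^{(i)}-\delta\nabla V_K(\bx_{K-1}^{(i)})$, and $\rho$ is a mixture of variance-$\sigma^2$ Gaussians. So $\rho^2/\hat p_K$ behaves like $\exp\{-\|\bx_1\|^2(1/\sigma^2-1/(4\delta))+O(\|\bx_1\|)\}$, and $1/\sigma^2-1/(4\delta)>0$ iff $\delta>\sigma^2/4$, not $\delta<\sigma^2/4$. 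In fact, for every realization of the particles, $\hat p_K$ is bounded beyond its outermost center by a single variance-$2\delta$ Gaussian profile. Hence $\int\rho^2/\hat p_K=\infty$ whenever $\delta<\sigma^2/4$, a case the hypothesis $\delta<\min(\epsilon,\sigma^2)$ explicitly allows. Keeping $g_k$ instead of bounding it by $1$ only adds $\beta_k^2/\alpha_k^2$ to the coefficient. That term is negligible at early grid points where $\beta_k\approx 0$, so nothing uniform in $k$ comes out. Consequently, at least at early grid points, $\bE[G^2]=\infty$. Both your $n^{-1}\bE\|(\bx_1-\bv^\star_k)g_kw\|^2$ term and your Chebyshev bound on $\mathbb{P}(D_n<D/2)$ then become vacuous. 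The fourth-moment refinement you propose makes this worse: it needs $\delta>3\sigma^2/8$, a stronger \emph{lower} bound. Separately, lower-bounding $\hat p_K$ by one component, $\hat p_K\ge n^{-1}(4\pi\delta)^{-d/2}e^{-\|\bx_1-m_i\|^2/(4\delta)}$, puts a factor $n$ into $\bE[w^2]$ and cancels the rate. Only the geometric-mean (Jensen) lower bound avoids this. Its price is a constant $\exp\{(4\delta n)^{-1}\sum_i\|m_i-\bar m\|^2\}$ that depends on the spread of the step-$(K-1)$ particles and must itself be controlled in expectation.

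So once the sign is corrected, your route gives $O(1/n)$ under a \emph{lower} bound on $\delta$ relative to $(1/\sigma^2+\beta_k^2/\alpha_k^2)^{-1}$, which is the opposite regime to the one stated. The paper's hypothesis is built for a different mechanism. It never splits on $D_n$; instead it uses convexity, $\overline{c_kw}^{-2}\le n^{-1}\sum_i[c_k(\bx_1^{(i)})w(\bx_1^{(i)})]^{-2}$. It then bounds the inverse moment $\bE[(g_kw)^{-2}]$, an integral of $e^{\|\bx-\beta_k\bx_1\|^2/\alpha_k^2}\hat p_K^2/\rho^2$. That integral converges only when $\delta$ is small compared with $\sigma^2$ and $\alpha_k^2/\beta_k^2$, which is what $\delta<\min(\epsilon,\sigma^2)$ encodes. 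The paper's Step 6 still needs $\bE\|\bb_kw-\mu_{\bb}\|^2<\infty$ and justifies it only loosely: under Assumption~\ref{assum2} it is $\by$, not $\bx_1$, that has bounded support. A rigorous version of either argument therefore needs $\delta$ in a two-sided window. By contrast, the complement-term difficulty you flagged is minor once $\bE[G^2]<\infty$. For nonnegative independent summands, $\mathbb{P}(D_n\le D/2)\le\exp\{-nD^2/(8\bE[G^2])\}$, which easily absorbs $\bE\max_i\|\bx_1^{(i)}\|^4\le n\,\bE\|\bx_1\|^4$.
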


\section{Numerical Experiments}\label{numerical}

In this section, we investigate the empirical performance of our algorithm for sampling from target distributions that may be unnormalized. To demonstrate its flexibility, we consider examples ranging from simple low-dimensional settings to challenging high-dimensional scenarios, showing that the method is capable of recovering multimodal distributions. Specifically, in Section~\ref{ldmix}, we apply our method to a two-dimensional Gaussian mixture with 20 components; in Section~\ref{hdmix}, we consider a high-dimensional Gaussian mixture with 5 components; and in Section~\ref{realmix}, we study high-dimensional multimodal distributions arising in large-scale real-world problems. Code can be obtained from the github
repository https://github.com/hanwenhuanghep/Flow-ODE-Sampling.

In addition to the Monte Carlo–based ODE approach of \cite{ding2023samplingfollmerflow}, which estimates the velocity field in~(\ref{velocity_int}) via Monte Carlo averaging with i.i.d.\ samples from the standard Gaussian distribution $N(0,\bI_d)$, we also compare against standard Hamiltonian Monte Carlo (HMC) \citep{duane1987, neal2011}. For the Gaussian mixture experiments in Sections~\ref{ldmix} and~\ref{hdmix}, ground-truth samples are drawn directly, and each method is evaluated using five metrics: (i)~$\ell_2$ mean error, (ii)~$\ell_2$ second-moment error, (iii)~Energy Distance \citep{szekely2013}, (iv)~maximum mean discrepancy (MMD) with an RBF kernel (with bandwidth selected via the median heuristic), and (v)~Sliced Wasserstein Distance (SWD) computed over 200 random projections. For the real-world application in Section~\ref{realmix}, we use both the unbiased U-statistic and the biased V-statistic of the kernelized Stein discrepancy as evaluation metrics \citep{doucet2022scorebaseddiffusionmeetsannealed}.

\subsection{Sampling from two-dimensional Gaussian mixture}\label{ldmix}

Our first example concerns sampling from a two-dimensional normal mixture model,
\begin{eqnarray}\label{mmodes}
f(\bx)&=&\sum_{i=1}^{I}\frac{w_i}{2\pi\sigma_i^2}\exp\left\{-\frac{1}{2\sigma_i^2}(\bx-\bmu_i)^T(\bx-\bmu_i)\right\}. 
\end{eqnarray}
We examine a mixture of 20 Gaussians with $I=20$, $\sigma_1=\cdots=\sigma_{20}=0.1$, and $w_1=\cdots=w_{20}=0.05$. The mean vectors are 
\begin{eqnarray}\nn
(\bmu_1,\cdots,\bmu_{20})&=&\left(\begin{array}{c}2.18~8.67~4.24~8.41~3.93~3.25~1.70~4.59~6.91~6.87\\
5.76~9.59~8.48~1.68~8.82~3.47~0.50~5.60~5.81~5.40\end{array}\right.\\\nn
&&\left.\begin{array}{c}5.41~2.70~4.98~1.14~8.33~4.93~1.83~2.26~5.54~1.69\\2.65~7.88~3.70~2.39~9.50~1.50~0.09~0.31~6.86~8.11\end{array}\right),
\end{eqnarray}
as studied in \cite{Liang01062001} and \cite{kou2006}. In this case, most local modes are more than five standard deviations from their nearest neighbor, ensuring that modes are well separated. Consequently, transitions between modes require trajectories passing through regions of near-zero probability—posing a significant challenge for sampling algorithms and making these mixtures stringent benchmarks. 

We applied our ALMC-ODE Algorithm to generate 10,000 samples from distribution (\ref{mmodes}). As shown in Figure~\ref{fig:kou_scatter}, the scatter plot of ALMC-ODE generated samples closely matches that of the true distribution by visiting all 20 mixture components with high frequency, producing empirical distributions that align well with the ground truth. In comparison, MC-ODE and HMC visit only a small subset of components within the same sampling budget: MC-ODE tends to explore only components located near the origin while struggling to reach those farther away, and HMC gets trapped in whichever single mode it encounters first. Table~\ref{tab:ldmix_metrics} further reports five distribution-distance metrics averaged over 20 independent runs: $\ell_2$ mean error, $\ell_2$ second-moment error, Energy Distance \citep{szekely2013}, MMD with RBF kernel, and Sliced Wasserstein Distance (SWD). Across all five metrics, ALMC-ODE substantially outperforms both MC-ODE and HMC. Notably, HMC achieves a high acceptance rate of $0.9659 \pm 0.0131$ yet exhibits large variance across seeds on all metrics—a signature of mode trapping, where each chain run gets stuck in a different single component.

\begin{figure}[ht]
\centering
\includegraphics[width=\textwidth]{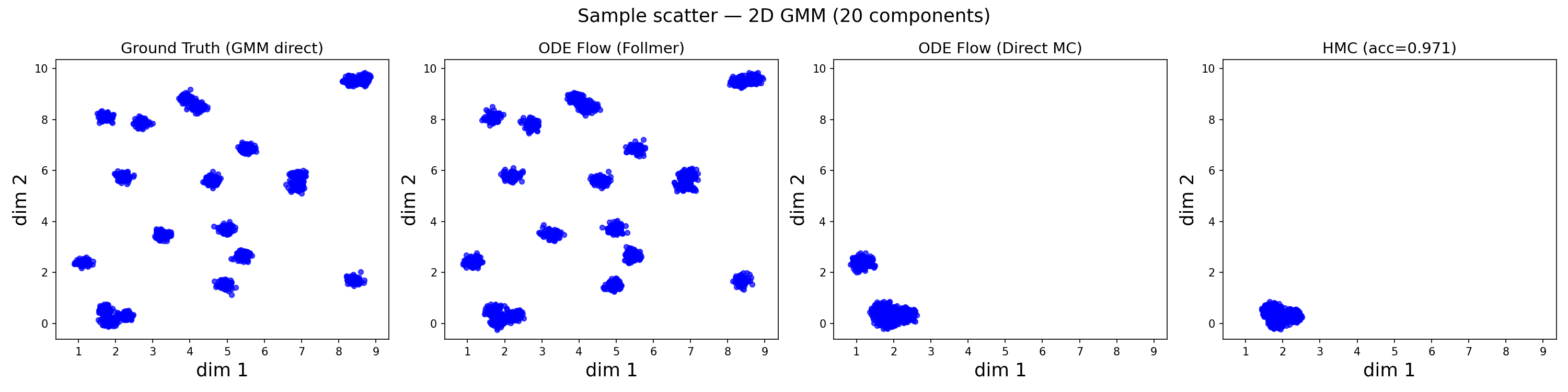}
\caption{Scatter plots of 10,000 samples from the 2-dimensional 20-component Gaussian mixture.}
\label{fig:kou_scatter}
\end{figure}

\begin{table}[ht]
\centering
\caption{Distribution-distance metrics (mean $\pm$ std) over 20 independent runs
         for the 2-dimensional 20-component Gaussian mixture. ALMC-ODE uses the F\"{o}llmer
         interpolant with AIS-weighted particles; MC-ODE uses per-step Monte Carlo velocity
         estimation; HMC is a single chain with step size $0.05$ and 10 leapfrog steps.
         Lower values indicate better agreement with the ground truth.}
\label{tab:ldmix_metrics}
\begin{tabular}{l ccc}
\toprule
\textbf{Metric} & \textbf{ALMC-ODE} & \textbf{MC-ODE} & \textbf{HMC} \\
\midrule
$\ell_2$ Mean Error        & $0.4403 \pm 0.0709$ & $5.0321 \pm 0.0793$ & $3.7861 \pm 1.6679$ \\
$\ell_2$ 2nd-Moment Error  & $6.6350 \pm 0.6393$ & $39.8293 \pm 0.6596$ & $33.8614 \pm 12.6232$ \\
Energy Distance            & $0.0864 \pm 0.0103$ & $5.4449 \pm 0.1596$ & $4.7977 \pm 1.7579$ \\
MMD (RBF)                  & $0.0105 \pm 0.0013$ & $0.6028 \pm 0.0125$ & $0.5356 \pm 0.1612$ \\
Sliced Wasserstein         & $0.4232 \pm 0.0298$ & $3.4296 \pm 0.0573$ & $3.1454 \pm 0.6198$ \\
\midrule
HMC Acceptance Rate        & ---                 & ---                 & $0.9659 \pm 0.0131$ \\
\bottomrule
\end{tabular}
\end{table}

\subsection{Sampling from high-dimensional Gaussian mixture}\label{hdmix}

Our second example considers the problem of drawing samples from a Gaussian mixture model (\ref{mmodes}) in $\mathbb{R}^d$ with $d = 100$, $I = 5$, equal mixture weights $w_i = 1/5$, and covariances $\Sigma_i = 0.1 \, \bI_d$ for all $i$. The component means are concentrated in the first two coordinates:
\begin{equation}\nn
    \bmu_1 = (10,10,0,\ldots,0)^\top, \quad
    \bmu_2 = (15,15,0,\ldots,0)^\top, \quad
    \bmu_3 = (5,15,0,\ldots,0)^\top,
\end{equation}
\begin{equation}\nn
    \bmu_4 = (15,5,0,\ldots,0)^\top, \quad
    \bmu_5 = (5,5,0,\ldots,0)^\top,
\end{equation}
with the remaining $d - 2$ coordinates set to zero. The minimum inter-mode Euclidean distance is $7.07 \approx 22\,\sigma$, making the mixture severely multimodal. This benchmark tests the ability of flow-based samplers to recover a well-separated multimodal distribution embedded in high dimensions. This experiment isolates the key advantage of the
probability-flow approach: the ability to transport mass across all modes simultaneously, bypassing the inter-mode barriers that trap gradient-based MCMC.

To obtain weighted samples, we run an annealed Langevin Markov chain with $T = 1000$ steps. The annealing schedule is specified by a step-size sequence $\{\delta_k\}_{k=1}^{T}$ linearly decreasing from $1.0$ to $0.1$, and an annealing sequence $\{\lambda(t_k)\}_{k=1}^{T}$ linearly increasing from $0$ to $1$. We generate $n = 10{,}000$ weighted samples for estimating the velocity field using an importance-weighted conditional expectation. We then propagate $N = 10{,}000$ fresh samples drawn from the standard Gaussian reference to the target $\rho$ by integrating an ODE. We adopt the F\"{o}llmer interpolating flow schedule for the drift.

HMC is run as a chain initialized from $\mathcal{N}(0, \bI_{100})$ with step size
$\varepsilon = 0.05$, $L = 10$ leapfrog steps, and a burn-in of 1,000 samples.  Both
methods produce 10,000 post-burn-in samples, and experiments are repeated 10 times.  Since MC-ODE performs extremely poorly in this high-dimensional setting, we report only the comparison with HMC.

Table~\ref{tab:summary} summarizes the mean $\pm$ standard deviation across replications.
ALMC-ODE consistently produces lower Energy Distance, MMD, and SWD, reducing Energy Distance by a factor of $\sim\!57\times$, MMD by
$\sim\!75\times$, and Sliced Wasserstein by $\sim\!7\times$ relative to HMC. HMC shows near-constant poor performance---a hallmark of mode trapping, with no
variance in where the chain gets stuck. ALMC-ODE is dramatically closer to the ground truth on all metrics.

\begin{table}[ht]
\centering
\caption{Summary statistics (mean $\pm$ std) over 10 replications for the 100-dimensional
         5-component Gaussian mixture. Lower values indicate better approximation of the target distribution.}
\label{tab:summary}
\begin{tabular}{l cc}
\toprule
\textbf{Metric} & \textbf{ALMC-ODE} & \textbf{HMC (single chain)} \\
\midrule
$\ell_2$ Mean Error        & $0.8402 \pm 0.1720$ & $7.075 \pm 0.035$ \\
$\ell_2$ 2nd-Moment Error  & $18.47 \pm 3.98$    & $134.4 \pm 0.77$  \\
Energy Distance            & $0.1036 \pm 0.0285$ & $5.926 \pm 0.052$ \\
MMD (RBF)                  & $0.00520 \pm 0.00172$ & $0.3895 \pm 0.0028$ \\
Sliced Wasserstein         & $0.0916 \pm 0.0122$ & $0.6612 \pm 0.0033$ \\
\midrule
HMC Acceptance Rate        & --- & $0.9739 \pm 0.0013$ \\
\bottomrule
\end{tabular}
\end{table}

Figure~\ref{fig:scatter} shows sample scatter plots in the first two dimensions for one realization: ground-truth
samples cover all five modes, ALMC-ODE samples spread across multiple modes, while HMC
samples cluster in a single mode.

\begin{figure}[ht]
\centering
\includegraphics[width=\textwidth]{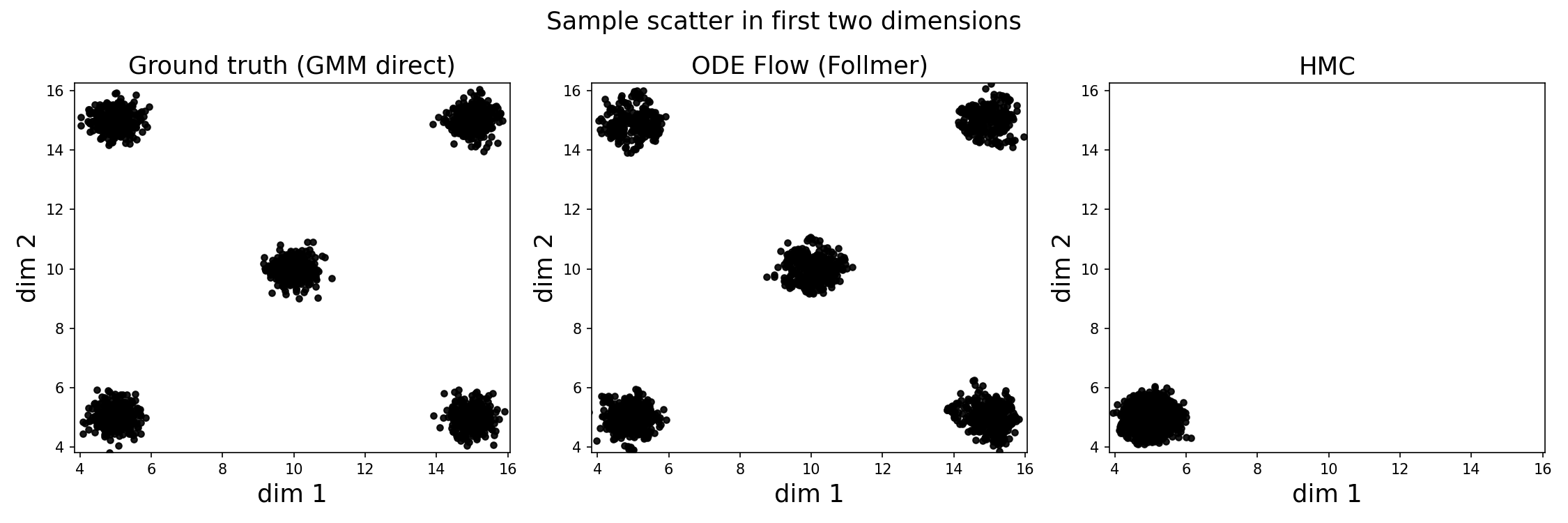}
\caption{Samples projected onto the first two dimensions for seed 81.
         Left: ground truth (all 5 modes); center: ALMC-ODE (multi-modal coverage);
         right: HMC (trapped in a single mode; acceptance rate 0.975).}
\label{fig:scatter}
\end{figure}

The F\"{o}llmer ODE integrates probability mass from $\mathcal{N}(0,\bI_d)$ proportionally
across \emph{all} modes, guided by ALMC-weighted particles.  Because the annealed
Langevin chain uses an interpolation schedule ($\lambda(t_k)$ from 0 to 1), it tunnels
between modes in the low-temperature phase and correctly assigns importance weights at
the final temperature.  The resulting ALMC-ODE samples therefore cover the full mixture
geometry.  The HMC acceptance rate is very high, yet the Energy Distance is roughly $57\times$ larger than that of ALMC-ODE.  This confirms that high
acceptance does not imply good mixing: the chain accepts nearly every proposal because it
explores one mode efficiently, but never crosses the $\approx\!22\,\sigma$ barrier to
reach other modes.  The near-zero variance of HMC metrics across replications reflects this
deterministic mode-trapping behavior.  The ALMC-ODE $\ell_2$ second-moment error ($18.47$) is substantially lower than HMC
($134.4$), because ALMC-ODE samples capture both inter-mode spread and intra-mode variance,
while HMC samples reflect only the variance of a single component.


\subsection{Allen--Cahn Field System}\label{realmix}

Our third example considers the stochastic Allen--Cahn model \citep{Berglund_2017}, a canonical and widely used framework for describing microscopic phase transitions in condensed matter systems. The model is defined in terms of a random field $\phi:[0,1] \rightarrow \mathbb{R}$ governed by the stochastic partial differential equation (SPDE)
\begin{eqnarray}\label{field}
\partial_t \phi = a \partial_s^2 \phi + a^{-1}(\phi - \phi^3) + \sqrt{2\beta^{-1}}\eta(t,s),
\end{eqnarray}
where $a>0$ is a parameter, $\beta$ denotes the inverse temperature, $s\in[0,1]$ is the spatial variable, and $\eta(t,s)$ is a spatio-temporal white noise. Dirichlet boundary conditions are imposed such that $\phi(0)=\phi(1)=0$. The invariant distribution of this SPDE is the Gibbs measure associated with the Hamiltonian
\begin{eqnarray}\label{Hamiltonian}
U(\phi) = \beta \int_0^1 \left\{ \frac{a}{2} (\partial_s \phi)^2 + \frac{1}{4a}(1 - \phi^2(s))^2 \right\} ds.
\end{eqnarray}
The first term in \eqref{Hamiltonian} penalizes spatial fluctuations of $\phi$, promoting alignment of the field in either the positive or negative direction. As a result, the Hamiltonian exhibits two global minima, denoted by $\phi^+$ and $\phi^-$, corresponding to configurations where $\phi \approx \pm 1$ (see Figure~\ref{fig:phifour_comparison}). 

To proceed computationally, we discretize the spatial domain $[0,1]$ using $d$ grid points $0=s_0<s_1<\cdots<s_d<s_{d+1}=1$. Let $\bx \in \mathbb{R}^d$ denote the vector of field values at the interior grid points, i.e., $x_i=\phi(s_i)$. This leads to a discretized target density of the form 
\begin{eqnarray}\label{dphifour}
\log \mu(\bx) = -\beta \left( \frac{a}{2\Delta s} \sum_{i=1}^{d+1} (x_i - x_{i-1})^2 + \frac{b \Delta s}{4} \sum_{i=1}^d (1 - x_i^2)^2 \right),
\end{eqnarray}
where $\Delta s=1/d$. In our experiments, we set $a=0.1,b=1/a=10$, and $\beta=20$. Due to the free energy barrier between $\phi^+$ and $\phi^-$,  traditional MCMC methods based on local updates fail to mix effectively, even over long timescales. In this scenario, the diffusive nature of our method is essential to prevent the learned flow from collapsing onto a single mode, enabling it instead to explore both global minima.

We discretize the field on a lattice of $d=64$ interior grid points with Dirichlet
boundary conditions $x_0=x_{d+1}=0$. We first use Algorithm~\ref{alg1} to produce $N=10{,}000$ importance-weighted particles in
        $T=10{,}000$ steps, with scale schedule $\delta_k$ linearly decreasing from
        $0.1$ to $0.001$ and annealing schedule $\lambda(t) = 1 - \ee^{-50t}$.  The probability-
        flow ODE (Algorithm~\ref{alg2}) then transports $n=1{,}000$ fresh Gaussian
        samples to the target using the F\"{o}llmer drift~\eqref{eq:velocity-field-IS} integrated with $N_t = 100$
        Euler steps. For MC-ODE, the velocity field is estimated by drawing $N_{\text{mc}}=10{,}000$ independent bridge samples at each
        step rather than reusing ALMC particles; importance weights are computed on the
        fly without annealing. For HMC,  we run a chain for $2{,}000$ burn-in steps with leapfrog steps $L=30$ and
        step size $\varepsilon=0.02$.  

Figure~\ref{fig:phifour_comparison} shows representative field configurations sampled by
each method.  Both $\phi^+$ and $\phi^-$ modes (configurations close to $\phi\approx+1$
and $\phi\approx-1$, respectively) are visually identifiable in the ALMC-ODE output; HMC
samples collapse to a single polarity, confirming mode trapping.  The MC-ODE estimator
produces numerically degenerate samples in this $d=64$ setting (its KSD is undefined).   

\begin{figure}[ht]
\centering
\begin{subfigure}[b]{0.32\textwidth}
  \centering
  \includegraphics[width=\textwidth]{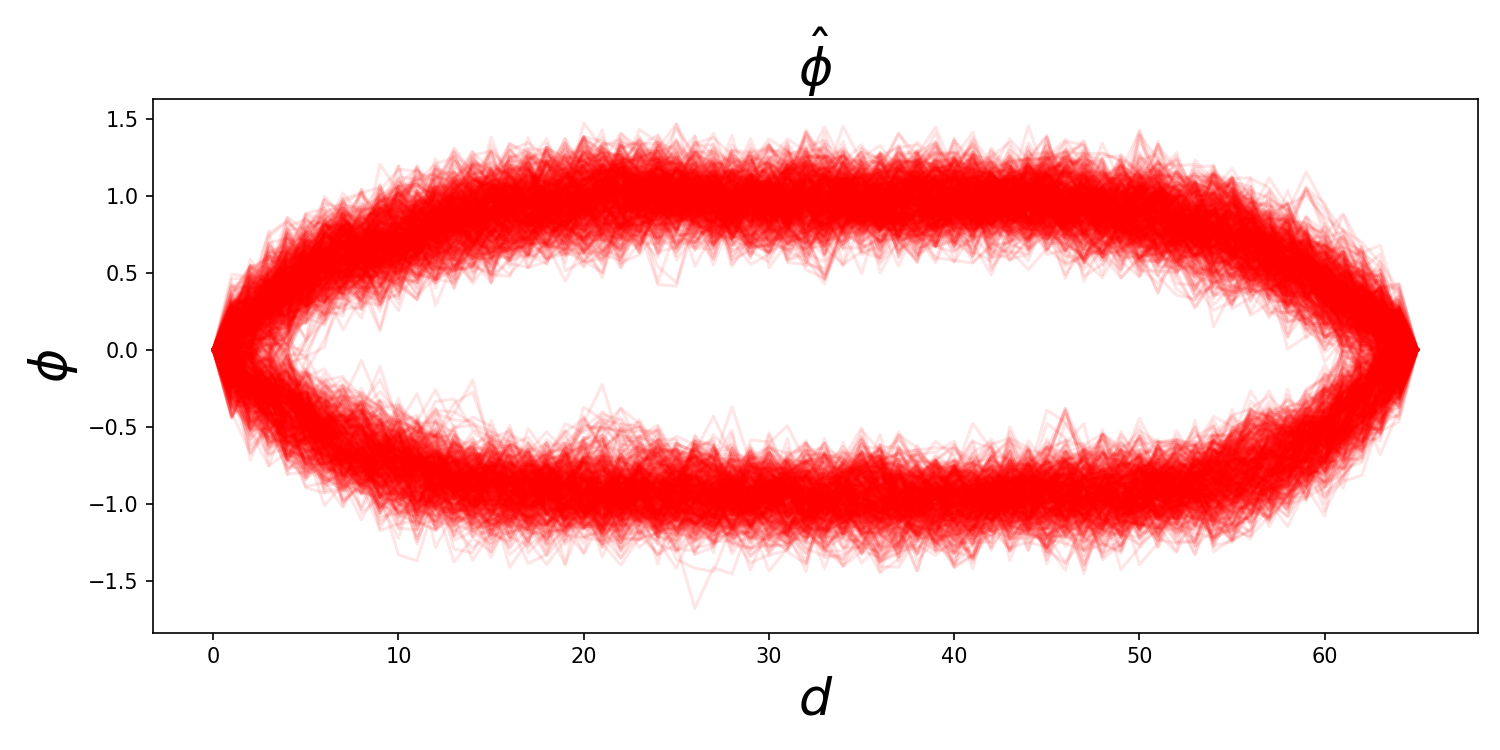}
  \label{fig:phifour_ode}
\end{subfigure}
\hfill
\begin{subfigure}[b]{0.32\textwidth}
  \centering
  \includegraphics[width=\textwidth]{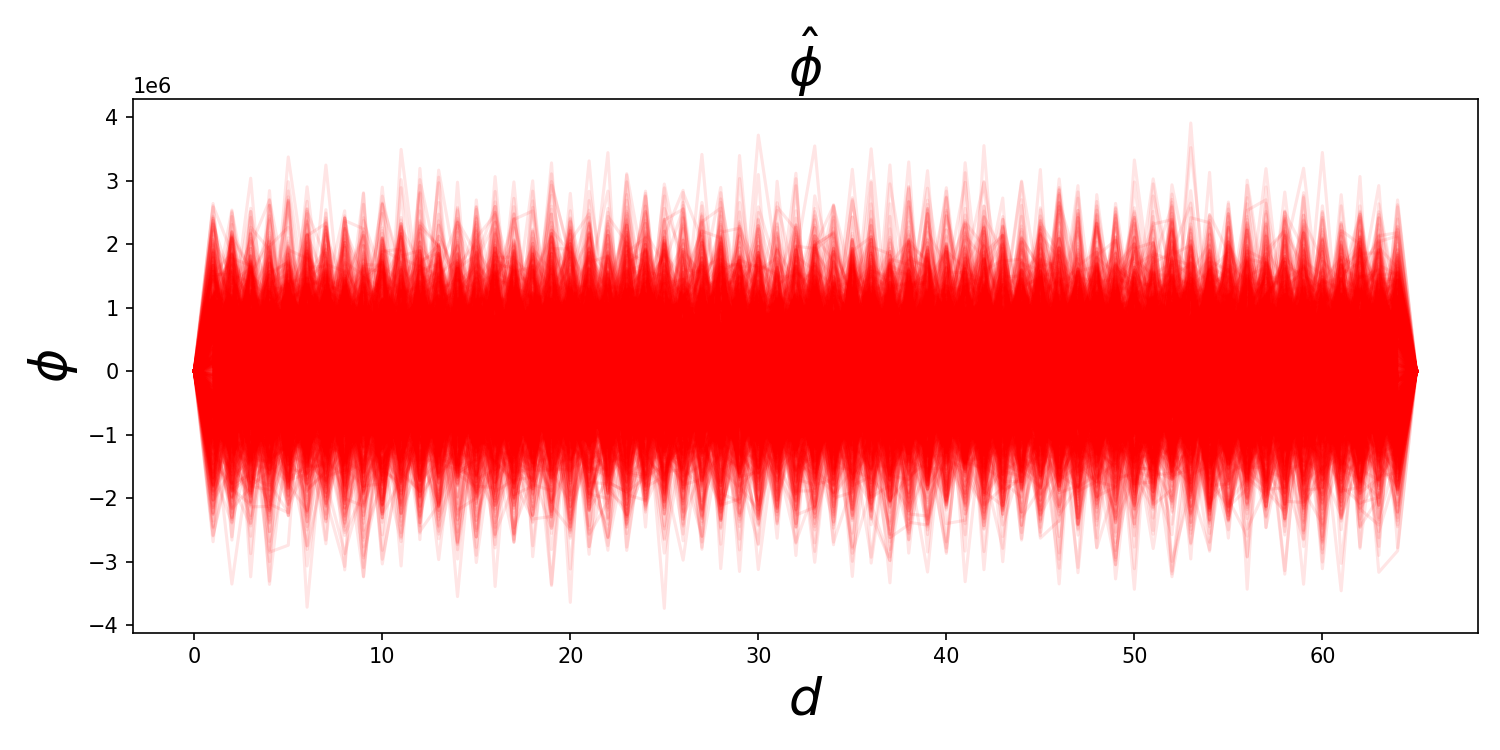}
  \label{fig:phifour_naive}
\end{subfigure}
\hfill
\begin{subfigure}[b]{0.32\textwidth}
  \centering
  \includegraphics[width=\textwidth]{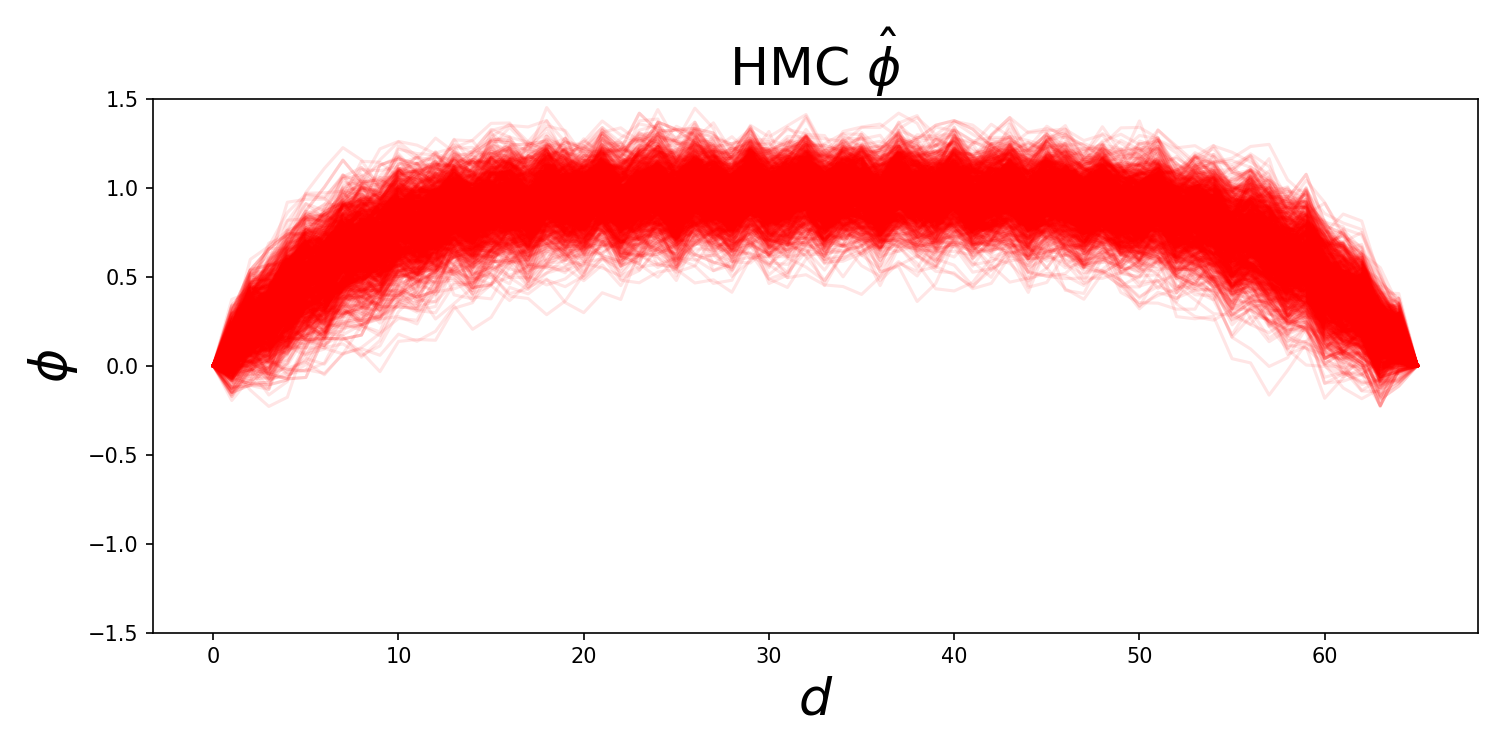}
  \label{fig:phifour_hmc}
\end{subfigure}
\caption{$d=64$ Allen--Cahn field configurations sampled by three methods.
         Each trace is a single field configuration $\bx\in\mathbb{R}^{64}$
         (padded with boundary zeros); $1{,}000$ samples are overlaid per panel.
         ALMC-ODE captures both $\phi^+$ and $\phi^-$ modes.
         MC-ODE produces degenerate near-zero configurations due to weight collapse.
         HMC is trapped in a single mode despite an acceptance rate of $0.936$.}
\label{fig:phifour_comparison}
\end{figure}

Table~\ref{tab:phifour_stein} reports the kernelized Stein discrepancy
\citep{liu2016kernelized} with the inverse multi-quadric kernel,
$k(x,x') = (1+\|x-x'\|^2)^{-1/2}$.  Both the unbiased U-statistic and the biased
V-statistic are listed; lower values indicate better agreement with the target $\mu$.

\begin{table}[ht]
\centering
\caption{Kernelized Stein discrepancy (KSD) for the $d=64$ Allen--Cahn target.
         Lower values indicate closer agreement with $\mu$.}
\label{tab:phifour_stein}
\begin{tabular}{lcc}
\toprule
\textbf{Method} & \textbf{KSD U-stat} & \textbf{KSD V-stat} \\
\midrule
ALMC-ODE     & $146.15$ & $217.23$ \\
MC-ODE       & NaN      & NaN      \\
HMC          & $759.00$ & $789.97$ \\
\bottomrule
\end{tabular}
\end{table}

ALMC-ODE successfully explores both the $\phi^+$ and $\phi^-$ phases of the field.
This is a direct consequence of the annealed Langevin chain interpolating between the
Gaussian reference and the target $\rho$ along a path that gradually reveals the
free-energy barrier; once the barrier is crossed during the diffusion phase, importance
reweighting corrects for the path.  In contrast, HMC with local proposals cannot tunnel
across the barrier on any practical timescale, even with a high acceptance rate of
$0.936$: the chain simply explores one mode efficiently while the other remains
inaccessible.

The MC-ODE velocity-field estimator draws fresh bridge samples at every integration
step and weights them by $\rho$ without annealing.  In $d=64$, the unnormalized
importance weights concentrate on an exponentially small fraction of particles---a
manifestation of the curse of dimensionality---so the weighted average collapses to a
single effective sample and the resulting KSD is undefined.  ALMC-ODE circumvents this
by distributing the bridging work across $T=10{,}000$ annealed steps,
keeping the effective sample size stable throughout.

\section{Conclusion}\label{conclusion}

We propose Annealed Langevin Monte Carlo for Flow ODE Sampling (ALMC-ODE), a method for sampling from unnormalized Boltzmann densities that integrates two key components: (i) an annealed Langevin Monte Carlo phase (Algorithm~\ref{alg1}), which generates importance-weighted particles that approximate the target distribution via a Jarzynski-based reweighting scheme, and (ii) a probability-flow ODE phase (Algorithm~\ref{alg2}), which transports fresh Gaussian samples to the target distribution by estimating the velocity field using these weighted particles.

The theoretical foundation of ALMC-ODE is built upon three main results. Proposition~\ref{prop1} establishes an exact reweighting identity for general time-inhomogeneous Markov chains, extending the classical framework of annealed importance sampling. Proposition~\ref{prop2} characterizes the optimal backward kernel that minimizes the variance of the Jarzynski weights, showing that it coincides with the time-reversal of the forward dynamics. Proposition~\ref{prop3} demonstrates that the Monte Carlo estimator of the velocity field achieves the standard $\mathcal{O}(1/n)$ mean squared error rate under mild regularity conditions.

Numerical experiments on several challenging benchmarks validate the practical effectiveness of ALMC-ODE. The results indicate that the proposed method provides a principled and competitive alternative to conventional MCMC approaches for highly multimodal target distributions, particularly in high-dimensional settings where energy barriers between modes render gradient-based samplers inefficient.

Several directions for future work warrant further investigation. On the theoretical side, extending Proposition~\ref{prop3} to a joint non-asymptotic error bound that simultaneously accounts for Langevin discretization error, ODE integration error, and finite-sample effects would provide clearer guidance on how to tune $K$, $\epsilon_k$, and $n$. On the algorithmic side, replacing the first-order Euler integrator with higher-order solvers, as well as incorporating adaptive annealing schedules (e.g., by equalizing the effective sample size between successive intermediate distributions \citep{neal2001annealed}), may help reduce both discretization bias and weight degeneracy. For very high-dimensional problems, substituting the kernel-based velocity estimator with a neural network trained on ALMC particles could improve scalability and further connect ALMC-ODE with diffusion-based generative modeling. Finally, extending the framework to applications such as Bayesian posterior inference and molecular free-energy estimation would broaden its practical impact.

\appendix

\section{Proof of Proposition~\ref{prop1}}\label{proof1}

\begin{proof}
By the recursive definition of $A_k$ in \eqref{iteration}, telescoping the increments
gives
\begin{eqnarray}\label{weight}
\exp(A_k)
  = \ee^{-V_k(\bx_k)+V_{0}(\bx_0)}
    \prod_{q=1}^k\frac{\nu_q(\bx_q,\bx_{q-1})}{\mu_q(\bx_{q-1},\bx_q)}.
\end{eqnarray}
The joint probability density of the path $(\bx_0,\ldots,\bx_k)$ under the forward
dynamics is
\begin{eqnarray}\nn
P(\bx_0,\ldots,\bx_k)
  = Z_{0}^{-1}\ee^{-V_{0}(\bx_0)}\prod_{q=1}^k\mu_{q}(\bx_{q-1},\bx_q).
\end{eqnarray}
For any measurable $f$, multiplying through by $\exp(A_k)$ and integrating over the path
gives
\begin{eqnarray}\nn
\bE[f(\bx_k)e^{A_k}]
  &=& \int f(\bx_k)\,Z_{0}^{-1}\ee^{-V_k(\bx_k)}
      \prod_{q=1}^k\nu_{q}(\bx_{q},\bx_{q-1})\,\diff\bx_0\cdots \diff\bx_k \\\nn
  &=& Z_{0}^{-1}\int f(\bx_k)\,\ee^{-V_k(\bx_k)}\,\diff\bx_k
   = Z_{0}^{-1}Z_{k}\,\bE_{k}[f(\bx_k)],
\end{eqnarray}
where the second equality uses the fact that $\int\nu_q(\bx_q,\bx_{q-1})\,\diff\bx_{q-1}
= 1$ for each $q$, so the integrals over $\bx_0,\ldots,\bx_{k-1}$ collapse to one.
Taking $f \equiv 1$ yields $\bE[e^{A_k}] = Z_{0}^{-1}Z_{k}$.  Dividing the general
identity by this normalization constant gives the first claim in \eqref{adjust}, and the
second claim $Z_k = Z_0\,\bE[e^{A_k}]$ follows immediately.
\end{proof}

\section{Proof of Proposition~\ref{prop2}}\label{proof2}

\begin{proof}
We show that the backward kernel \eqref{back} is a valid probability kernel and that it
minimizes the variance of $\exp(A_k)$.

\paragraph{Validity.}
Substituting \eqref{back} into the normalization condition verifies that $\nu_k^{\mathrm{opt}}$
integrates to one:
\begin{eqnarray}\nn
\int\nu_k^{\mathrm{opt}}(\bx_k,\bx_{k-1})\,\diff\bx_{k-1}
  = \int\frac{p_{k-1}(\bx_{k-1})\,\mu_k(\bx_{k-1},\bx_k)}{p_k(\bx_k)}\,\diff\bx_{k-1}
  = \frac{p_k(\bx_k)}{p_k(\bx_k)} = 1,
\end{eqnarray}
where the second equality uses the Chapman–Kolmogorov relation
$p_k(\bx_k) = \int p_{k-1}(\bx_{k-1})\mu_k(\bx_{k-1},\bx_k)\,\diff\bx_{k-1}$.

\paragraph{Variance minimization.}
From Proposition~\ref{prop1}, the reweighting identity \eqref{adjust} holds for any
valid pair $(\mu_k, \nu_k)$.  The variance of the estimator is governed by the variance
of $\exp(A_k)$.  By the chain rule and the law of total variance applied to the joint
path distribution,
\begin{eqnarray}\nn
\var_Q[\exp(A_k)]
  = \var_{q_k}[w(\bx_k)]
    + \bE_{q_k}\bigl[\var_{Q(\cdot|\bx_k)}[\exp(A_k)]\bigr],
\end{eqnarray}
where $q_k$ denotes the marginal of $\bx_k$ under the backward measure $Q$ and
$w(\bx_k) = \tilde{\pi}_k(\bx_k)/p_k(\bx_k)$.  Substituting \eqref{back} into
\eqref{weight} yields
\begin{eqnarray}\nn
\exp(A_k)
  &=& \frac{Z_k\,\tilde{\pi}_{k}(\bx_k)}{Z_0\,\tilde{\pi}_{0}(\bx_0)}
      \prod_{q=1}^{k}\frac{\nu_{q}^{\mathrm{opt}}(\bx_{q},\bx_{q-1})}
                          {\mu_{q}(\bx_{q-1},\bx_{q})} \\\nn
  &=& \frac{Z_k\,\tilde{\pi}_{k}(\bx_k)}{Z_0\,\tilde{\pi}_{0}(\bx_0)}
      \prod_{q=1}^{k}\frac{p_{q-1}(\bx_{q-1})}{p_{q}(\bx_{q})} \\\nn
  &=& \frac{Z_k\,\tilde{\pi}_{k}(\bx_k)}{Z_0\,p_{k}(\bx_k)},
\end{eqnarray}
where the telescoping product collapses using $p_0(\bx_0) = \tilde{\pi}_0(\bx_0)$.
The conditional variance term $\var_{Q(\cdot|\bx_k)}[\exp(A_k)]$ is then zero—given
$\bx_k$, $\exp(A_k)$ is a deterministic function—so the total variance reduces to
$\var_{q_k}[w(\bx_k)]$, which depends only on the marginal weights and not on the choice
of backward kernel.  Hence \eqref{back} achieves the minimum possible variance, equal to
$\var_{q_k}[\tilde{\pi}_k(\bx_k)/p_k(\bx_k)]$.
\end{proof}

\section{Proof of Proposition~\ref{prop3}}\label{proof3}

\begin{proof}
We derive an $\mathcal{O}(1/n)$ MSE bound for the velocity-field approximation produced
by Algorithm~\ref{alg1}.  Throughout, $\bx \in \bR^d$ is a fixed evaluation point and
all expectations are over the joint law of the particles $\{\bx_k^{(i)}\}$ defined by
the iteration~\eqref{discrete}.

\paragraph{Step 1: Shorthand notation.}
Recall from \eqref{eq:gk-def} and \eqref{eq:w-def} that
\[
g_k(\bx, \bx_1) = \exp\!\bigl\{-\|\bx - \beta_k\bx_1\|^2 / (2\alpha_k^2)\bigr\}
\quad\text{and}\quad
w(\bx_1) = \rho(\bx_1)/\hat{p}_K(\bx_1).
\]
Introduce the shorthands
\begin{equation}\label{eq:bc-def}
    \bb_k(\bx_1) = \bx_1\, g_k(\bx, \bx_1), \qquad
    c_k(\bx_1)  = g_k(\bx, \bx_1),
\end{equation}
and the weighted sample averages
\begin{equation}\label{eq:overbars}
    \overline{\bb_k w}
      = \frac{1}{n}\sum_{i=1}^n \bb_k\!\left(\bx_1^{(i)}\right) w\!\left(\bx_1^{(i)}\right),
    \qquad
    \overline{c_k w}
      = \frac{1}{n}\sum_{i=1}^n c_k\!\left(\bx_1^{(i)}\right) w\!\left(\bx_1^{(i)}\right).
\end{equation}
With this notation, the population and empirical velocity estimators from
\eqref{eq:v-star-hat} become
\begin{equation}\nn
    \bv^\star_k(\bx)
      = \frac{\bE[\bb_k(\bx_1)\,w(\bx_1)]}{\bE[c_k(\bx_1)\,w(\bx_1)]},
    \qquad
    \hat{\bv}_k(\bx)
      = \frac{\overline{\bb_k w}}{\overline{c_k w}}.
\end{equation}

\paragraph{Step 2: Decomposing the error.}
Write $\mu_{\bb} = \bE[\bb_k w]$ and $\mu_c = \bE[c_k w]$ for the population means.
Using the identity $a/b - c/d = (ad - bc)/(bd)$,
\begin{equation}\nn
    \hat{\bv}_k(\bx) - \bv^\star_k(\bx)
      = \frac{\mu_c\,\overline{\bb_k w} - \mu_{\bb}\,\overline{c_k w}}
             {\mu_c\,\overline{c_k w}}.
\end{equation}
Adding and subtracting $\mu_{\bb}\mu_c$ in the numerator and applying the triangle
inequality $(a+b)^2 \le 2(a^2+b^2)$ followed by Cauchy–Schwarz gives
\begin{eqnarray}\nn
\bE\bigl\|\hat{\bv}_k(\bx) - \bv^\star_k(\bx)\bigr\|^2
  &\le& \frac{\|\mu_{\bb}\|^2}{\mu_c^2}
        \,\bE\!\left[\frac{(\overline{c_k w} - \mu_c)^2}{\overline{c_k w}^2}\right]
      + \bE\!\left[\frac{\|\overline{\bb_k w} - \mu_{\bb}\|^2}{\overline{c_k w}^2}\right].
\end{eqnarray}
Both terms are handled identically; we bound the second in detail.

\paragraph{Step 3: Applying Jensen's inequality to the denominator.}
Since $x \mapsto 1/x^2$ is convex for $x > 0$, Jensen's inequality applied to the
empirical average $\overline{c_k w}$ yields
\begin{equation}\nn
    \frac{1}{\overline{c_k w}^2}
    \le \frac{1}{n}\sum_{i=1}^n \frac{1}{\bigl[c_k(\bx_1^{(i)})\,w(\bx_1^{(i)})\bigr]^2}.
\end{equation}

\paragraph{Step 4: Bounding the numerator variance.}
Substituting the Jensen bound and expanding the squared norm, the second term becomes
\begin{eqnarray}\nn
\bE\!\left[\frac{\|\overline{\bb_k w} - \mu_{\bb}\|^2}{\overline{c_k w}^2}\right]
  &\le&
  \bE\!\left\{
    \left\|\frac{1}{n}\sum_{i=1}^n
      \bigl[\bb_k(\bx_1^{(i)})w(\bx_1^{(i)})-\mu_{\bb}\bigr]\right\|^2
    \cdot
    \frac{1}{n}\sum_{j=1}^n
      \frac{1}{\bigl[c_k(\bx_1^{(j)})\,w(\bx_1^{(j)})\bigr]^2}
  \right\}.
\end{eqnarray}
Expanding the squared average and using the conditional independence of the particles
$\{\bx_1^{(i)}\}$, the cross-terms ($i \ne j$) vanish after taking expectations,
leaving diagonal and off-diagonal contributions:
\begin{eqnarray}\nn
&=&\frac{1}{n^2}\,
     \bE\!\left\{
       \frac{\|\bb_k(\bx_1^{(1)})w(\bx_1^{(1)})-\mu_{\bb}\|^2}
            {\bigl[c_k(\bx_1^{(1)})\,w(\bx_1^{(1)})\bigr]^2}
     \right\}\\\nn
&&+\;\frac{n-1}{n^2}\,
     \bE\!\left\|\bb_k(\bx_1^{(2)})w(\bx_1^{(2)})-\mu_{\bb}\right\|^2
     \cdot
     \bE\!\left\{
       \frac{1}{\bigl[c_k(\bx_1^{(1)})\,w(\bx_1^{(1)})\bigr]^2}
     \right\}.
\end{eqnarray}
Both terms involve the common factor
$\bE\!\bigl\{1/\bigl[c_k(\bx_1^{(1)})w(\bx_1^{(1)})\bigr]^2\bigr\}$, which we bound
next.

\paragraph{Step 5: Bounding the reciprocal-weight moment.}
Recalling $c_k(\bx_1) = g_k(\bx,\bx_1)$ and $w(\bx_1) = \rho(\bx_1)/\hat{p}_K(\bx_1)$,
\begin{eqnarray}\nn
\bE\!\left\{\frac{1}{\bigl[c_k(\bx_1^{(1)})\,w(\bx_1^{(1)})\bigr]^2}\right\}
  &=& \int
      \frac{\exp\!\left\{\dfrac{\|\bx - \beta_k\bx_1\|^2}{\alpha_k^2}\right\}
            \hat{p}_K(\bx_1)^2}
           {\rho(\bx_1)^2}
      \,\diff\bx_1 \\\nn
  &\le& \int \bE_{\by}\!\left[
          \exp\!\left\{
            \frac{\|\bx - \beta_k\bx_1\|^2}{\alpha_k^2}
            + \frac{\|\bx_1 - \by\|^2}{\sigma^2}
          \right\}
        \right]
        \hat{p}_K(\bx_1)^2
        \,\diff\bx_1,
\end{eqnarray}
where the inequality uses Assumption~\ref{assum2} to bound $1/\rho(\bx_1)^2$ via the
Gaussian-mixture structure of $\rho$ and Jensen's inequality.  This integral is finite
provided $\delta < \min(\epsilon, \sigma^2)$, since the quadratic exponent is dominated
by the Gaussian tails under that condition.  Denote this finite constant by $C_0 > 0$.

\paragraph{Step 6: Combining the bounds.}
Under Assumption~\ref{assum2}, $\bx_1$ has bounded support and $g_k \le 1$, so
$\|\bb_k(\bx_1)\| \le R$ for all $\bx_1$.  Moreover,
$\bE\bigl[\|\bb_k(\bx_1)w(\bx_1) - \mu_{\bb}\|^2\bigr] \le \mathcal{O}(1)$ follows from the bounded
support and the fact that the effective sample size ensures $\bE[w(\bx_1)^2] \le C_1$
for some finite constant $C_1$.  Combining the bounds from Steps 4 and 5,
\begin{equation}\nn
\bE\!\left[\frac{\|\overline{\bb_k w} - \mu_{\bb}\|^2}{\overline{c_k w}^2}\right]
  \le \frac{C_0\,\mathcal{O}(1)}{n^2}
    + \frac{(n-1)\,C_0\,\mathcal{O}(1)}{n^2}
  = \mathcal{O}\!\left(\frac{1}{n}\right).
\end{equation}
An identical argument shows that the first term in Step 2 is also $\mathcal{O}(1/n)$,
which completes the proof of \eqref{eq:mse-bound}.
\end{proof}

\bibliographystyle{chicago} 
\bibliography{biblist}

\end{document}